\begin{document}
\newtheorem{emp}{Example}
\newtheorem{cn}{Conjecture}
\newtheorem{lm}{Lemma}
\newtheorem{thm}{Theorem}
\newtheorem{cor}{Corollary}
\newtheorem{df}{Definition}
\newtheorem{pf}{Proof}
\newtheorem{alg}{Algorithm}
\newtheorem{rmk}{\bf Remark}


\title{Robustness in Chinese Remainder Theorem}
\author{ Hanshen~Xiao,
            Yufeng~Huang,
            Yu Ye
            and~Guoqiang~Xiao
  \thanks{Hanshen Xiao is with MIT Computer Science and Artificial Intelligence Laboratory, Massachusetts Institute of Technology, Cambridge, USA (email: hsxiao@mit.edu),

  Yufeng Huang is with the Department of Electronic Engineering, Tsinghua University, Beijing, China (email: huang-yf15@mails.tsinghua.edu.cn)

  Yu Ye is with the School of Electrical Engineering, Royal Institute of Technology (KTH), Stockholm, Sweden (email: yu9@kth.se)

  Guoqiang Xiao is with the College of Computer and Information Science, Southwest University, Chongqing, China (email: gqxiao@swu.edu.cn)}}


\let\lc\langle
\let\rc\rangle

\date{}

\maketitle
\bibliographystyle{apalike}
\begin{abstract}
Chinese Remainder Theorem (CRT) has been widely studied with its applications in frequency estimation, phase unwrapping, coding theory and distributed data storage. Since traditional CRT is greatly sensitive to the errors in residues due to noises, the problem of robustly reconstructing integers via the erroneous residues has been intensively studied in the literature. In order to robustly reconstruct integers, there are two kinds of traditional methods: the one is to introduce common divisors in the moduli and the other is to directly decrease the dynamic range. In this paper, we take further insight into the geometry property of the linear space associated with CRT. Echoing both ways to introduce redundancy, we propose a pseudo metric to analyze the trade-off between the error bound and the dynamic range for robust CRT in general. Furthermore, we present the first robust CRT for multiple numbers to solve the problem of the CRT-based undersampling frequency estimation in general cases. Based on symmetric polynomials, we proved that in most cases, the problem can be solved in polynomial time efficiently. The work in this paper is towards a complete theoretical solution to the open problem over 20 years.

\end{abstract}

\begin{IEEEkeywords}
Chinese Remainder Theorem (CRT), robustness, frequency estimation from undersampling waveforms, error bound, symmetry polynomial.
\end{IEEEkeywords}

\begin{section}{Introduction}

Conventional Chinese Remainder Theorem (CRT) is to reconstruct a single integer by its residues modulo several moduli. There exist many cases, which require a large ring to be represented by several small distributed subrings, in engineering applications. Therefore, CRT has been widely studied and applied in error correcting codes \cite{2000chinese}, \cite{ETRI2016}, \cite{soft2000}, frequency estimation from undersampled waveforms \cite{1997multiple}, \cite{1997}, \cite{2000}, phase unwrapping \cite{2007phase}, \cite{2008phase}, wireless sensor networks \cite{2012storage}, modular multiplication computing \cite{access2016}, \cite{ICASSP2017} etc. However, conventional CRT aims to reconstruct a single integer, which also requires that moduli are co-prime and the residues are error-free. A small error in residues may lead to a large error in reconstruction. Therefore, derived from undersampling frequency estimation \cite{1997}, \cite{1997multiple}, generalized CRT (GCRT) for multiple integers and robust CRT (RCRT) receive considerable attention. During last two decades, a great number of researches promote the development of CRT-based signal parameter estimation and related application fields, which can be concluded as the following three stages.

 i) \textbf{Generalized CRT (GCRT) for multiple integers}. The problem is brought in \cite{1997} in 1997 and a case of narrow-band signals has been studied. It can be mathematically stated as follows. Given $L$ moduli  $\{m_l, l=1,2,...,L \}$, which stand for the sampling frequencies in the model, and $N$ integers $\{X_i, i=1,2,...,N\}$, standing for the frequencies to be estimated, are required to be determined from $L$ residue sets $R_l=\{r_{il} | r_{il}=\langle X_i \rangle_{m_l}, l=1,2,...,L \}$, where $\langle X_i \rangle_{m_l}$ denotes the residue of $X$ modulo $m_l$.  Here $\{r_{il}\}$ are assumed to be error-free while not ordered. The difficulty of reconstruction is due to the missing correspondence between residues and integers in each residue set. The algorithm proposed in \cite{2007sharpened} solves the problem in general with a sharpened lower bound  of $lcm(m_1, m_2, ..., m_l)$, where $lcm$ denotes the least common multiple. Xiao et al. \cite{2015newcondition} proposed an algorithm to deal with a narrow-band case, where $\min_{l} m_l > \max_{i,j} |X_i-X_j|$, in polynomial time complexity. The first general GCRT scheme with polynomial running time is proposed in \cite{2016symmetric} with a loosen restriction of moduli, which introduces symmetric polynomials to compensate the lack of corresponding relationship.

ii) \textbf{Robust CRT (RCRT) for a single integer}. To overcome sensibility to error, redundancy is introduced by adding a common factor $\Gamma$ in moduli. For a group of coprime integers $\{M_1,M_2,...,M_N\}$, a common factor $\Gamma$ is introduced and the moduli $\{m_i\}$ are in the form $\{m_i= \Gamma M_1, m_2= \Gamma M_2, ... , m_N= \Gamma M_N \}$. With the redundancy, the key idea behind almost all RCRT schemes is to correctly recover the folding number. The search-based RCRT are proposed in \cite{search1}, \cite{search2}, \cite{search3}. In 2010, the first closed-form RCRT is developed in \cite{2010closed}. The main results of above-mentioned work can be concluded as follows. If the error in each residue is upper bounded by $\delta$, which is less than a quarter of $\Gamma$, an integer can be robustly recovered, where the reconstruction error is also upper bounded by $\delta$.  Thus reconstruction with RCRT may not be precise but robust in the sense of small error in each residue and also small deviation in reconstruction. A variant RCRT of \cite{2010closed} with a simpler form is developed in \cite{2014arxiv}. Following those, a recursive reconstruction, called multistage RCRT, is proposed in \cite{2014multistage} for more general moduli. In multistage RCRT, a part of residues are combined each time to robustly recover the residue modulo the $lcm$ of corresponding moduli of residues selected. Thus it is possible to achieve robustness even if not all the moduli share a common factor. Besides, a lower bound of the product of all the moduli under a given $\delta$ is developed and proved in \cite{2017notes}, which also derives the optimal modulus selection scheme. In \cite{2015mle}, a fast maximum likelihood estimation (MLE) based RCRT is proposed, where the complexity is reduced to searching within $L$ elements. Furthermore, the research on RCRT also leads to the analysis of Lee-metric based remainder code, which regards representing an integer with its residues as encoding. The error correcting of the Lee-metric based remainder code is also based on RCRT, where a generalization over the polynomial ring has recently been studied in \cite{poly}. A lattice-based decoding algorithm is proposed in \cite{2004lattice}. As already mentioned, such code is different from the classic remainder code \cite{2000chinese}, \cite{ETRI2016}, where error is measured with the Hamming weight and redundancy is introduced via redundant moduli. To guarantee the robustness, a natural question raised is about the relationship between the minimal code distance (corresponding to $\delta$) and the dynamic range under given moduli, which also determines the information ratio. In \cite{2017towards}, a closed-form relationship has been developed under the case of $L=2$, i.e, two moduli. However, in order to apply it in general cases, more advanced techniques are required.

 iii) \textbf{Robust CRT for multiple integers}. This is the last stage to provide a complete theoretical solution for undersampling signal parameter estimation with CRT-based methods. Up to now, there are few works concerning the issue of robustly reconstructing multiple integers with erroneous residues. The RCRT for two integers is recently studied in \cite{2016two}. A scheme for the case of narrow-band signals is proposed in \cite{2017notes}. However, there is no result concerning the general case. The difficulty lies  both in absence of correspondences and errors caused by noise.

In this paper, we further the research in all the three above-mentioned aspects. Our main contributions can be concluded as follows.
 \begin{itemize}
 \item We develop the first nearly closed-form generalized RCRT for multiple integers, called robust generalized CRT (RGCRT), as a complete solution to CRT-based frequency estimation from undersampling waveforms. We prove that the issue of GRCRT can be simplified to the issue of GCRT and develop a new class of symmetric polynomials to exponentially sharpen the lower bound of moduli in \cite{2016symmetric}.
 \item Second, we take further insight into the general relationship between the error bond $\delta$ and dynamic range $K$ of the integer to be recovered. A novel pseudo metric, called shift pseudo metric, has been proposed to measure the distance between residue vectors and also disclose the essence relation of $\delta$ and $K$ from a geometry perspective. To show its potential application in coding theory, we develop a search-based error correcting scheme and provide probability of success to find out satisfied moduli for the code construction. It is a general framework to study such remainder code.
 \end{itemize}

The rest contents are organized as follows. The background is given in Section II. In Section III, we introduce the shift pseudo metric and present the results concerning the relationship between $\delta$ and $K$ for general moduli. In section IV, we give a shorter proof of the results obtained in \cite{2017towards} under the case of two moduli. A search-based scheme and probability analysis of random modulus selection are proposed and analyzed in Section V. One of our main contributions to develop the GRCRT is shown in Section VI.  
The conclusion of the paper follows in Section VII.

\end{section}

\begin{section}{Background}

\noindent We begin with the problem introduced in \cite{1997} that multiple frequency determination from multiple undersampled waveforms. $N$ frequencies, $\{X_1,X_2,...,X_N\}$, in $Hz$ assumed to be integers need to be determined in a super positioned waveform $x(t)$, i.e.,\\
\begin{equation}
x(t)=\sum_{i=1}^{N}A_ie^{2\pi jX_it}
\end{equation}
where $A_i$ are nonzero complex coefficients for $1\leq i\leq N$. Let $\{m_1,m_2,...,m_L\}$, arranged in an ascending order, be the sampling frequencies which may be much less than the frequencies, $\{X_i\}$. Then the undersampled waveforms become\\
\begin{equation}
x_{m_l}[n]=\sum_{i=1}^{N}A_ie^{\frac{2\pi jX_in}{m_l}},~n\in \mathbb{Z}
\end{equation}
under a sampling frequency $f_s=m_l$ Hz, where $\mathbb{Z}$ denotes the set of integers. Applying the $m_l$-point Discrete Fourier Transform (DFT) for $x_{m_l} [n]$, we have\\
$$
DFT_{m_l}(x_{m_l}[n]) [k]=\sum_{i=1}^{N}A_i \mathbf{1}(k-\langle X_i \rangle _{m_l})
$$
where $\mathbf{1}(k-\langle X_i \rangle _{m_l})=1$ when $k=\langle X_i \rangle _{m_l}$, otherwise it turns out to be 0. Here $\langle X_i \rangle _{m_l}$ denotes the residue of $X_i$ modulo $m_l$,  i.e., $X_i - m_l\lfloor X_i/m_l \rfloor$. Thus the problem is transformed to recover $X_i$ from $L$ residue sets $R_l = \{ r_{il} = \langle X_i \rangle _{m_l} | i=1,2,...,N \}$, $l=1,2,...,L$, without knowing the correspondence between $X_i$ and $r_{il}$.

   When we consider the noise introduced, for a complex-valued waveform,
\begin{equation}
x(t)=\sum_{i=1}^{N}A_ie^{2\pi jX_it} + \omega(t)
\end{equation}
where $\omega(t)$ is additive noise. Thus
$$
DFT_{m_l}(x_{m_l}[n]) [k] =\sum_{i=1}^{N}A_i \mathbf{1}(k-\langle X_i \rangle _{m_l}) +W_l [k]
$$
Based on \cite{2013noise},\cite{2015mle}, the variance of $W_l [k]$ is proportional to a fraction of $m_l$. A robust reconstruction of $X_i$ is expected to tolerate small errors $\Delta r_{il}$ introduced in each residue $r_{il}$, i.e., to robustly construct $X_i$ with the erroneous $\widetilde{r}_{il}$. The robustness is defined as that for a prescribed error bound $\delta$, if $\max_{i,l} |\Delta {r}_{il}| < \delta$, then the estimation $\widetilde{X}_{i}$ of $X_{i}$ satisfies that $\max_{i} |\widetilde{X}_{i}-X_{i}| < \delta$.

    The first closed-form RCRT for the case that $N=1$ is proposed in \cite{2010closed} and its variant form is presented in \cite{2014arxiv}. The key idea in \cite{2010closed} and \cite{2014arxiv} is to recover the folding number correctly, which is similar to the previous searching based methods. Follow-up research such as maximum likelihood estimation (MLE) based \cite{2015mle} or multi-stage based \cite{2014multistage} RCRT is derived from it. Define the set of selected moduli as $\mathscr{M} = \lbrace m_l | l = 1, 2, 3\dots, L \rbrace$. Let $\text{lcm}(\mathscr{M})$ denote the least common multiple (lcm) of all moduli, i.e., $\text{lcm}(\mathscr{M})= \text{lcm} (m_1,m_2,...,m_L)$ and $\mathcal{M}(K)$ denote the set of all multiples of a modulus, i.e., $\lbrace Y~ |~0\leq Y \leq K, \exists m_l \in \mathscr{M}, m_l | Y \rbrace$. 
    When there is only one integer $X$ to be recovered,  $X$ can be represented as
    $$X=k_lm_l+ \widetilde{r}_l - \Delta{r_l}=k_lm_l+r_l$$
     for $l=1,2,...,L$. Here, $\widetilde{r_l}$ denotes the erroneous residue of $X$ modulo $m_l$ and $\Delta{r_l}$ is the error introduced correspondingly. Therefore, $k_1m_1-k_lm_l=\widetilde{r}_l-\widetilde{r}_1-(\Delta{r_l}-\Delta{r_1})$, which can be transformed to
    \begin{equation}
    \begin{aligned}
    &k_1\frac{m_1}{gcd(m_1,m_l)}-k_l\frac{m_l}{gcd(m_1,m_l)} \equiv k_1\frac{m_1}{gcd(m_1,m_l)}\\
    &\equiv \frac{\widetilde{r}_l-\widetilde{r}_1}{gcd(m_1,m_l)}+\frac{\Delta{r_1}-\Delta{r_l}}{gcd(m_1,m_l)} \mod \frac{m_l}{gcd(m_1,m_l)}
    \end{aligned}
    \end{equation}
   When $| \frac{\Delta{r_1}-\Delta{r_l}}{gcd(m_1,m_l)}|<\frac{1}{2}$, it is not hard to obverse that
   $$[\frac{\widetilde{r}_l-\widetilde{r}_1}{gcd(m_1,m_l)}] = \frac{r_l-r_1}{gcd(m_1,m_l)}$$
  Thus the residue of $k_1$ modulo $\frac{m_l}{gcd(m_1,m_l)}$ has been found. Here $[\epsilon]$, $\epsilon \in \mathbb{R}$, is the round operation, i.e., $[\epsilon]$ is the integer closest to $\epsilon$. Therefore $k_1$ can be recovered with conventional CRT correctly. It is obvious that
   $$|k_1m_1+[\frac{\sum_{l=1}^{L} \widetilde{r}_l}{L}]-X| \leq \max_{l} |\Delta{r_l}|,$$
  which achieves error control. Now we view the RCRT from a geometry perspective. Basically speaking, what we are interested in is that under the circumstance where each component (residue) may have an error upper bounded by $\delta$, for an integer within $[0,K]$ where $K$ is the dynamic range and $K<\text{lcm}(\mathscr{M})$, how to robustly reconstruct the integer with a deviation bounded by $\delta$ as well. An $L$-dimensional residue vector of an integer $X$ is constructed by its residues modulo $L$ moduli, which can be regarded as a point of $L$-dimensional Euclid space restricted in a hypercube, whose edges are determined by moduli, called residue space. Integers within $[0,K]$ can be considered as located on a family of parallel lines with direction vector $(1,1,...,1)$. For a more intuitive sense, we give two examples of residue presentation in the cases of 2-modulus and 3-modulus in Fig. 1 and Fig. 2, respectively.
  \begin{figure}[htbp]
  \centering
  \includegraphics[width=0.45\textwidth]{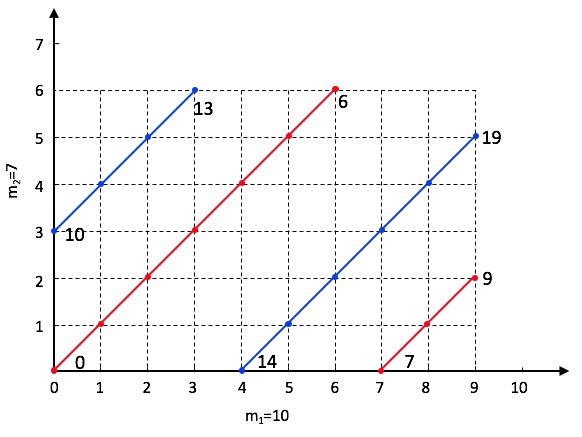}
  \caption{2-D Case with Moduli $m_1=10$ and $m_2=7$}
  \label{fig:digit}
\end{figure}

\begin{figure}[htbp]
  \centering
  \includegraphics[width=0.45\textwidth]{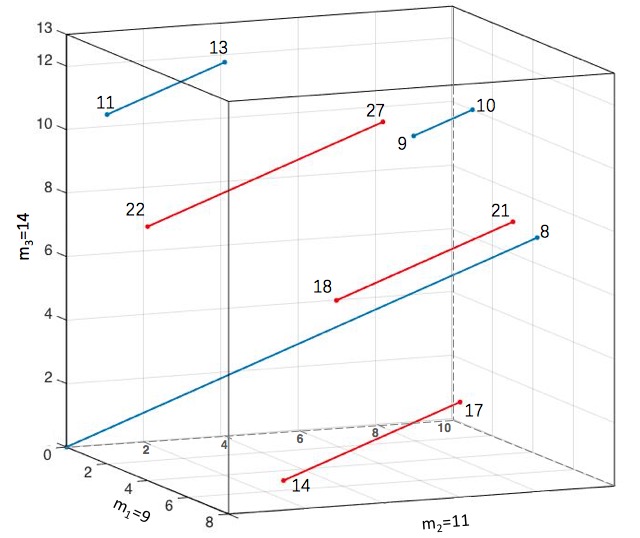}
  \caption{3-D Case with Moduli $m_1=9$, $m_2=11$ and $m_3=14$}
  \label{fig:digit}
\end{figure}

  The lines are denser as $K$ increases. $\mathcal{M}(K)$ are consisted of all starting points of the parallel lines within the hypercube. We will see later in Corollary 2 that robustly recovering $X$ is to find out in which line the residue vector of $X$ is located. Therefore intuitively the robustness can be achieved if two hypercubes, whose centers are on two different lines respectively with the same length of edges as $2\delta$, do not share intersection.  Therefore, there are two research problems.
  \begin{itemize}
  \item(1) Given a dynamic range $K$ and a $\mathscr{M}$, what is the minimal distance between the parallel lines, in which the residue vectors of $X\in[0,K]$ is located.
  \item(2) Given a dynamic range $K$ and an error bound $\delta$, from which a lower bound of minimal distance between the parallel lines can be derived, how to select a set of moduli.
  \end{itemize}
  Before we start, the notations used in this paper are listed in Table I for clarity.

  \begin{table}
\caption{List of Notations}
\begin{tabular*}{8.8cm}{lll}
\hline
Notations &~~~~~~~~~~~~~~~~ Explanation    \\
\hline
$L$  & The number of moduli selected  \\
$ m_l, l=1,2,...,L$ & Moduli selected  \\
$\mathscr{M}$ & $\mathscr{M}=\{m_l, l=1,2,...,L\}$ \\
$N$ & The number of integers to be recovered\\
$X_i, i=1,2,...,N$ & Integers to be recovered\\
$\{\widetilde{X}_{i}\}$ & Estimation of $X_i$ \\
$K$ & The dynamic range of $X_i$ \\
$\mathcal{M}(K)$ & Multiples of any modulus within $[0,K]$\\
$\delta $ & Error bound for the residue in each modulus\\
$\delta(K) $ & Maximal error bound for the given $K$ \\
$r_{il}$ & The residue of $X_i$ modulus $m_l$ \\
$\widetilde{r}_{il}$ & The erroneous residue of $X_i$ modulus $m_l$ received \\
$\Delta r_{il}$ & Error introduced in $\widetilde{r}_{il}$ comparing to $r_{il}$ \\
$\rho$ & Shift pseudo metric developed in the paper  \\
$\mathbf{x}$ & The residue vector of an integer $X$\\
$\mathbf{\tilde x}$ & The erroneous residue vector received \\
$B$ & $\max_{1 \leq i,j \leq N} | X_i - X_j |$ \\
$X_0$ & The largest element in $\mathcal{M}(K)$ such that $X_0 \leq X$ \\
$r^{c}_i$ & The common residue, i.e., $\langle X_i \rangle_{\Gamma}$\\

\hline
\end{tabular*}
\end{table}

   \end{section}

\begin{section}{Shift Pseudo-Metric and Error Bound}
\noindent For an integer $X$, let $\bf{x}$ be its residue vector, i.e, $\mathbf{x}=(\langle X \rangle_{m_1},\langle X \rangle_{m_2},\dots,\langle X \rangle_{m_L})$. In order to formalize the robustness and measure the distance between residue vectors, we propose a new pseudo metric, called $Shift~Pseudo~Metric$, with the definition as follows.

   \subsection{Shift Pseudo Metric}
     \begin{df}[Shift Pseudo Metric]\label{sm} For two residue vectors $\mathbf{x, y}$, where $X, Y\in [0, \text{lcm}(\mathscr{M})-1]$, \footnote{The axioms of a pseudo metric $\rho$ is slightly different from that of a metric, where there may exist two  elements $\mathbf{x}$ and $\mathbf{y}$, such that $\rho(\mathbf{x},\mathbf{y})=0$ whereas $\mathbf{x} \not = \mathbf{y}$.}
   $$
        \rho(\mathbf{x}, \mathbf{y}) = \max_{1\le l<j\le n} |(\lc X\rc_{m_l} - \lc X\rc_{m_j}) - (\lc Y \rc_{m_l} - \lc Y\rc_{m_j})|
  $$
    \end{df}
     We verify that this is a pseudo metric.\\
     \indent (1) $\rho(\mathbf{x}, \mathbf{x}) = 0$ is obvious. \\
       \indent (2) $\rho(\mathbf{y}, \mathbf{x}) =  \rho(\mathbf{x}, \mathbf{y})$ is obvious from the symmetry of $X,Y$ in the definition.\\
      \indent  (3) Note that for any two sequences $\{a_i\}$ and $\{b_i\}$ with a same length,
        \begin{equation}\label{def to sub}
            \max_i\lbrace a_i \rbrace + \max_i \lbrace b_i\rbrace \ge \max_i\lbrace a_i + b_i\rbrace
        \end{equation}
        Substitute  $|(\lc Y\rc_{m_l} - \lc Y\rc_{m_j})-(\lc X\rc_{m_l} - \lc X\rc_{m_j})|$, $|(\lc X\rc_{m_l} - \lc X\rc_{m_j})-(\lc Z\rc_{m_l} - \lc Z\rc_{m_j})|$ for $a_i$, $b_i$ in (\ref{def to sub}) respectively, and we have
        \begin{equation}
            \rho(\mathbf{y}, \mathbf{x}) + \rho(\mathbf{x}, \mathbf{z}) \ge \rho(\mathbf{y}, \mathbf{z})
        \end{equation}
    For a given dynamic range $K$, the shift pseudo-metric defines a pseudo-metric space in which the residue vectors are located. However, it differs from a metric in the following way.
    \begin{cor} Let $X_0, Y_0\in\mathcal{M}(K)$ be the largest integers which satisfy $X_0\le X$ and $Y_0\le Y$, respectively. Then $\rho(\mathbf{x}, \mathbf{y})=0$ if and only if $X_0=Y_0$.
            \begin{proof}
            If $X_0=Y_0=km_t$, let $Y=X+r$ with $r\ge 0$ without loss of generality. It is not hard to observe that $\lc Y\rc_{m_l}=\lc X+r \rc_{m_l}=\lc X\rc_{m_l}+\lc r \rc_{m_l}$, for any $l$. Combining it with the definition of $\rho$, we have $\rho(\mathbf{x}, \mathbf{y})=0$. On the other hand, if $\rho(\mathbf{x}, \mathbf{y})=0$, we have
            \begin{equation}
                \forall 1\le l<j\le L,  \lc X \rc_{m_l} - \lc X\rc_{m_j} = \lc Y\rc_{m_l} - \lc Y\rc_{m_j}
            \end{equation}
            Note that
            \begin{equation} \label{cor1 eq1}
                \lc X_0\rc_{m_l}-\lc X_0\rc_{m_j}=\lc Y \rc_{m_l} - \lc Y \rc_{m_j}
            \end{equation}
            and
            \begin{equation} \label{cor1 eq2}
                \lc Y_0\rc_{m_l}-\lc Y_0\rc_{m_j}=\lc X\rc_{m_l} - \lc X\rc_{m_j}
            \end{equation}
            Subtract (\ref{cor1 eq1}) by (\ref{cor1 eq2}) then we obtain $\forall l$
            \begin{equation}\label{assign}
                \lc X_0\rc_{m_l} - \lc Y_0\rc_{m_l} = C
            \end{equation}
            where $C$ is a constant. Since $X_0,Y_0\in\mathcal{M}(K)$, assume $m_g|X_0$ and $m_h|Y_0$. If $g\ne h$,  with assigning $l=g$, $l=h$ in equation (\ref{assign}) respectively, we have $C=-\lc Y_0\rc_{m_h}\le 0$ and $C=\lc X_0\rc_{m_g} \ge 0$ , which leads to $C=0$.  Otherwise if $g=h$, by assigning $l=g$, we have $C=0$ as well. Thus $X_0=Y_0$.
            \end{proof}
    \end{cor}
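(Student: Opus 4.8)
The plan is to establish the two implications separately, with the geometric fact that translating an integer \emph{within a single gap} of $\mathcal{M}(K)$ increments every residue by the same amount serving as the common engine for both directions.

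For the implication $X_0=Y_0\implies\rho(\mathbf{x},\mathbf{y})=0$, I would assume without loss of generality that $X\le Y$ and argue that the half-open interval $(X_0,Y]$ contains no multiple of any modulus: such a multiple would be an element of $\mathcal{M}(K)$ (it is nonnegative and at most $Y\le K$) lying strictly above $Y_0=X_0$ yet not exceeding $Y$, contradicting the maximality of $Y_0$. Since $(X_0,X]\subseteq(X_0,Y]$ is likewise free of multiples, the largest multiple of each $m_l$ that is at most $X$ equals the largest one at most $Y$, so $\lfloor X/m_l\rfloor=\lfloor Y/m_l\rfloor$ and therefore $\langle X+r\rangle_{m_l}-\langle X\rangle_{m_l}=Y-X$ for every $l$. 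This difference is independent of $l$, which makes $(\langle X\rangle_{m_l}-\langle X\rangle_{m_j})-(\langle Y\rangle_{m_l}-\langle Y\rangle_{m_j})$ vanish for all $l,j$, i.e. $\rho(\mathbf{x},\mathbf{y})=0$.

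For the converse, I would reuse the implication just proved. Because $X$ and $X_0$ share the same floor $X_0$ in $\mathcal{M}(K)$ (and similarly $Y$ and $Y_0$), the first direction gives $\langle X\rangle_{m_l}-\langle X\rangle_{m_j}=\langle X_0\rangle_{m_l}-\langle X_0\rangle_{m_j}$ and $\langle Y\rangle_{m_l}-\langle Y\rangle_{m_j}=\langle Y_0\rangle_{m_l}-\langle Y_0\rangle_{m_j}$. Feeding these together with the hypothesis $\langle X\rangle_{m_l}-\langle X\rangle_{m_j}=\langle Y\rangle_{m_l}-\langle Y\rangle_{m_j}$ shows that $\langle X_0\rangle_{m_l}-\langle Y_0\rangle_{m_l}$ is a constant $C$ not depending on $l$.

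The decisive and hardest step is to force $C=0$ using only divisibility. Writing $m_g\mid X_0$ and $m_h\mid Y_0$ (which exist since $X_0,Y_0\in\mathcal{M}(K)$), evaluating the constant relation at $l=g$ yields $C=-\langle Y_0\rangle_{m_g}\le 0$, and at $l=h$ yields $C=\langle X_0\rangle_{m_h}\ge 0$, so $C=0$ (the case $g=h$ gives $C=0$ at once). Hence $\langle X_0\rangle_{m_l}=\langle Y_0\rangle_{m_l}$ for all $l$, and since $X_0,Y_0$ lie in $[0,K]$ with $K<\text{lcm}(\mathscr{M})$, the uniqueness part of the classical CRT forces $X_0=Y_0$. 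I expect the two points requiring the most care to be the ``no wraparound inside a gap of $\mathcal{M}(K)$'' claim in the first direction and the sign argument pinning $C$ to zero in the second.
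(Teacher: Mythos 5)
Your proof is correct and follows essentially the same route as the paper's: the forward direction via the observation that residues shift uniformly within a gap of $\mathcal{M}(K)$, and the converse by reducing to $\langle X_0\rangle_{m_l}-\langle Y_0\rangle_{m_l}=C$ and pinning $C=0$ through the two divisibility evaluations. Your write-up is in fact slightly more careful than the paper's (the "no multiple in $(X_0,Y]$" justification and the explicit appeal to CRT uniqueness at the end are left implicit there), but the argument is the same.
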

    The corollary above shows that the shift pseudo-metric divide $[0, K]$ into $|\mathcal{M}(K)|$ classes, each represented by an element in $\mathcal{M}(K)$. Furthermore, the shift pseudo-metric acts as a metric over $\mathcal{M}(K)$ according to the metric axioms.

\subsection{Error Bound}

    \begin{thm}\label{errordist}
        Assuming that each component $r_l$ of the vector $\bf x$ may have an error, i.e.,
        $\tilde r_l = r_l + \Delta r_l$, where $|\Delta r_l|<\delta$, $r_l = \lc X\rc_{m_l}$ and  $\tilde r_l = \lc \tilde X\rc_{m_l}$. Then $\rho(\mathbf{\tilde x}, \mathbf{x}) < 2\delta $, where $\mathbf{\tilde x} $ is the erroneous residue vector.
    \end{thm}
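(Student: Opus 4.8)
The plan is to unwind the definition of the shift pseudo metric directly on the components of the two vectors and watch the unknown true residues cancel. Writing $\mathbf{\tilde x} = (\tilde r_1, \ldots, \tilde r_L)$ and $\mathbf{x} = (r_1, \ldots, r_L)$, the definition gives
$$\rho(\mathbf{\tilde x}, \mathbf{x}) = \max_{1 \le l < j \le L} \bigl| (\tilde r_l - \tilde r_j) - (r_l - r_j) \bigr|.$$
First I would regroup the terms inside the absolute value so that each modulus index is paired with itself rather than across the two vectors, i.e. rewrite $(\tilde r_l - \tilde r_j) - (r_l - r_j)$ as $(\tilde r_l - r_l) - (\tilde r_j - r_j)$.

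By the hypothesis $\tilde r_l = r_l + \Delta r_l$, each of these self-paired differences is exactly the injected error, so $(\tilde r_l - r_l) - (\tilde r_j - r_j) = \Delta r_l - \Delta r_j$. This is the crucial observation: because the pseudo metric only ever compares differences of two residues taken under the same pair of moduli, the possibly large and unknown residues $r_l, r_j$ drop out entirely, and only the small errors survive.

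The final step is a single application of the triangle inequality together with the uniform bound $|\Delta r_l| < \delta$: for every pair $l < j$,
$$|\Delta r_l - \Delta r_j| \le |\Delta r_l| + |\Delta r_j| < \delta + \delta = 2\delta.$$
Taking the maximum over all pairs preserves the strict inequality, which yields $\rho(\mathbf{\tilde x}, \mathbf{x}) < 2\delta$.

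As for the main obstacle, there is essentially no analytic difficulty here; the content lies entirely in setting up the cancellation correctly. The one point deserving a moment of care is the implicit modular reduction: a priori $\tilde r_l$ is $\langle \tilde X \rangle_{m_l}$, so one should make sure that no wraparound modulo $m_l$ spoils the identity $\tilde r_l - r_l = \Delta r_l$. The statement sidesteps this by positing $\tilde r_l = r_l + \Delta r_l$ as the component value directly, and because the metric is built from pairwise differences the argument never needs to reduce anything modulo a single $m_l$, so the identity holds verbatim and the bound follows.
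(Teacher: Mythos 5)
Your proof is correct and follows essentially the same route as the paper's: both unwind the definition of $\rho$, cancel the residues pairwise to leave $|\Delta r_l - \Delta r_j|$, and bound that by $2\delta$ via the triangle inequality. Your closing remark on the potential wraparound modulo $m_l$ is a detail the paper glosses over, but it does not change the argument.
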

    \begin{proof}
        \begin{equation} \label{thm1 eq}
        \begin{aligned}
            \rho(\mathbf{\tilde x}, \mathbf{x}) &= \max_{1\le l<j\le L} |(\lc X \rc_{m_l} - \lc X\rc_{m_j}) - (\lc \tilde X
            \rc_{m_l} - \lc \tilde X\rc_{m_j})| \\
            &= \max_{1\le l<j\le L} |\Delta r_l - \Delta r_j| < 2\delta
        \end{aligned}
        \end{equation}
    \end{proof}

    \begin{cor}
        Given a dynamic range $K$, the robustness bound is
        \begin{equation}
            \delta = \frac{1}{4} \min_{X, Y\in \mathcal{M}(K), X\not=Y} {\rho(\mathbf{x, y})}
        \end{equation}
    \end{cor}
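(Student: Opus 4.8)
The plan is to establish the stated quantity as the exact robustness threshold $\delta(K)$ by showing it is simultaneously sufficient and tight, exploiting the reduction (announced in the discussion preceding Corollary 1) that robustly recovering $X$ is equivalent to identifying which line — equivalently, which class $X_0\in\mathcal{M}(K)$ — its residue vector belongs to. Write $d=\min_{X,Y\in\mathcal{M}(K),\,X\neq Y}\rho(\mathbf{x},\mathbf{y})$ and let $\mathbf{x}_0$ denote the residue vector of $X_0$. I will argue that a received vector can always be decoded with deviation below $\delta$ precisely when $\delta\le\tfrac14 d$, so the maximal admissible error bound is $\delta(K)=\tfrac14 d$.

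For sufficiency, suppose $\delta\le\tfrac14 d$ and let $X_0$ be the class of the true integer $X$. By Theorem \ref{errordist}, $\rho(\mathbf{\tilde x},\mathbf{x})<2\delta$, and by Corollary 1, $\rho(\mathbf{x},\mathbf{x}_0)=0$; the triangle inequality for the pseudo-metric then gives $\rho(\mathbf{\tilde x},\mathbf{x}_0)<2\delta$. For any other class $Y_0\in\mathcal{M}(K)$ a second application yields $\rho(\mathbf{\tilde x},\mathbf{y}_0)\ge\rho(\mathbf{x}_0,\mathbf{y}_0)-\rho(\mathbf{\tilde x},\mathbf{x}_0)>d-2\delta\ge 2\delta$, so $X_0$ is the unique element of $\mathcal{M}(K)$ nearest to $\mathbf{\tilde x}$ and the correct line is identified. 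It then remains to reconstruct within the line: choosing any modulus $m_g$ with $m_g\mid X_0$, the choice of $X_0$ as the largest element of $\mathcal{M}(K)$ below $X$ forces $0\le X-X_0<m_g$, hence $\langle X\rangle_{m_g}=X-X_0$ with no wraparound, and the estimate $\widetilde X=X_0+\tilde r_g$ satisfies $|\widetilde X-X|=|\Delta r_g|<\delta$.

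For tightness I will show that any $\delta>\tfrac14 d$ admits a decoding failure. Fix $X_0\neq Y_0\in\mathcal{M}(K)$ attaining $\rho(\mathbf{x}_0,\mathbf{y}_0)=d$, and slide the two representatives along their lines — replacing $X_0,Y_0$ by $X=X_0+t$, $Y=Y_0+u$ without wraparound — so as to center the component differences $a_l=\langle X\rangle_{m_l}-\langle Y\rangle_{m_l}$. Since these differ from $\langle X_0\rangle_{m_l}-\langle Y_0\rangle_{m_l}$ by the common constant $t-u$, and the latter have range exactly $d=\max_l a_l^{(0)}-\min_l a_l^{(0)}$, the shift can be chosen so that $\max_l|a_l|\le\tfrac12 d<2\delta$. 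The midpoint vector $z_l=\langle X\rangle_{m_l}-a_l/2$ then lies within coordinate-wise distance $|a_l|/2<\delta$ of both $\mathbf{x}$ and $\mathbf{y}$, so the single received vector $\mathbf{z}$ is consistent with both $X$ and $Y$ under admissible errors; as no estimate can lie within $\delta$ of both once $|X-Y|\ge 2\delta$, robustness must fail.

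The main obstacle is this last tightness step, not the sufficiency half. Two difficulties must be addressed. First, the shift argument is continuous, whereas $X,Y$ are integers, so the centering of the $a_l$ can be carried out only up to rounding — harmless when $d$ is even but requiring care otherwise — and the shifts must keep $X,Y\in[0,K]$ while avoiding the wraparound that would invalidate $\langle X_0+t\rangle_{m_l}=\langle X_0\rangle_{m_l}+t$. Second, and more substantively, I must verify that the confusable pair is genuinely separated in value, $|X-Y|\ge 2\delta$, so that the ambiguity forces a reconstruction error of at least $\delta$; a cross-line confusion between integers that happen to be close in value would be harmless. The point is that small $\rho$ together with large integer separation is exactly the dangerous regime, and I would confirm $|X-Y|\ge 2\delta$ directly from $|X_0-Y_0|$ and the chosen shift (as in the configuration of Fig. 1), which is where the geometry of $\mathcal{M}(K)$ enters.
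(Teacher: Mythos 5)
Your sufficiency argument is essentially the paper's own proof: invoke Theorem \ref{errordist} to get $\rho(\mathbf{\tilde x},\mathbf{x})<2\delta$, use $\rho(\mathbf{x},\mathbf{x_0})=0$ and the triangle inequality to conclude $\rho(\mathbf{\tilde x},\mathbf{x_0})<2\delta$ while $\rho(\mathbf{\tilde x},\mathbf{x'})>4\delta-2\delta=2\delta$ for every other class $X'\in\mathcal{M}(K)$, so the class $X_0$ is uniquely identified. The only cosmetic difference is the final reconstruction: you use the single residue $\tilde r_g$ with $m_g\mid X_0$ (and your claim $0\le X-X_0<m_g$ is correct, since otherwise $X_0+m_g\le X\le K$ would be a larger element of $\mathcal{M}(K)$ not exceeding $X$), whereas the paper projects $\mathbf{\tilde x}$ onto the line through $\mathbf{x_0}$ and applies CRT; both yield $|\widetilde X-X|<\delta$.

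Where you diverge is the tightness half. The paper dismisses it in one sentence (``the inequalities cannot be improved''), while you attempt an explicit confusable pair: pick $X_0,Y_0$ attaining the minimum $d$, slide along the two lines to center the coordinate differences $a_l$, and hand the decoder the midpoint vector. This is the right idea and more substantive than what the paper offers, but, as you yourself flag, it is not yet a proof, and the issue you raise about needing $|X-Y|\ge 2\delta$ is a genuine obstruction rather than a formality: $\rho(\mathbf{x},\mathbf{y})$ depends only on the classes $X_0,Y_0$, so a pair attaining $d$ may consist of integers that are adjacent in value (e.g.\ $Y=m_1-1$ in the class of $0$ and $X=m_1$ in the class of $m_1$), in which case a single estimate lies within $\delta$ of both and no decoding failure is forced. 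Repairing this requires sliding $X$ and $Y$ apart with $t-u$ held fixed, and the room available on each line is bounded by the gap to the next element of $\mathcal{M}(K)$, which can be as small as $1$; so either an additional argument or a different choice of extremal pair is needed, and you have not supplied it. In short: your forward direction matches the paper; your converse direction is a more serious attempt than the paper's one-line assertion but remains incomplete at exactly the points you identify.
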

    \begin{proof}
    For $X \in [0, K]$, there exists the largest $X_0 \in\mathcal{M}(K)$, such that $X_0\le X$. When the robustness bound is $\delta$, we have $\rho(\mathbf{\tilde x, x}) < 2\delta$ by Theorem \ref{errordist}. Therefore we have
        \begin{equation}\label{rho1}
            \rho(\mathbf{\tilde x, x_0}) \le \rho(\mathbf{\tilde x, x}) + \rho(\mathbf{x, x_0}) < 2\delta + 0 = 2\delta
        \end{equation}
        In addition, for $\forall X' \in\mathcal{M}(K), X' \not=X_0$, evidently $\rho(\mathbf{x_0, x'}) \ge 4\delta$ by the definition of $\delta$.
        \begin{equation}
            \begin{aligned}\label{rho2}
                \rho(\mathbf{\tilde x, x'}) &\ge \rho(\mathbf{x_0, x'}) - \rho(\mathbf{x_0, \tilde x})\\ &> 4\delta-2\delta = 2\delta
            \end{aligned}
        \end{equation}
        Thus for each erroneous vector $\mathbf{\tilde x}$, we can correctly recover such $X_0$. This allows us to project the erroneous vector $\mathbf{\tilde x}$ onto the line starting at $\mathbf{x_0}$ with the direction vector $(1,1...1)$. It is easy to check that by conducting CRT on the projected vector $\mathbf{\widehat x}$, we have $|\widehat X - X|<\delta$. Observing the fact that inequalities (\ref{thm1 eq})(\ref{rho1})(\ref{rho2}) cannot be improved, we conclude the robustness bound is exactly $\delta$.
    \end{proof}
    Corollary 2 shows that  the robustness is achieved through figuring out $X_0$. With the formalization of $\delta$ developed, in the following, when we emphasize the relationship between $\delta$ and $K$ in the case that $K$ is given, we denote the minimal distance between residue vectors of any two integers within the dynamic range $[0,K]$ by $4\delta(K)$.
    \begin{lm}\label{lm:toproof 2->1}
         For any $X, Y \in \mathcal{M}(K)$, if there exist $l$ and $j$, such that $|(\lc X \rc_{m_l} - \lc X \rc_{m_j}) - (\lc Y \rc_{m_l} - \lc Y \rc_{m_j})| \ge m_l$,
        then $K \ge m_l$.

    \end{lm}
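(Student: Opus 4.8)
The plan is to prove the contrapositive: assuming $K < m_l$, I will show that no pair $X, Y \in \mathcal{M}(K)$ can produce a shift difference of magnitude at least $m_l$ in the coordinates $l, j$. The whole argument rests on one elementary observation, namely that once the dynamic range is smaller than a modulus, reduction modulo that modulus is trivial on $[0,K]$.

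First I would fix $X, Y \in \mathcal{M}(K)$ together with the indices $l, j$, and abbreviate the quantity of interest as
\[
D = (\lc X\rc_{m_l} - \lc X\rc_{m_j}) - (\lc Y\rc_{m_l} - \lc Y\rc_{m_j}).
\]
Under the assumption $K < m_l$, every element of $\mathcal{M}(K)$ lies in $[0, K] \subseteq [0, m_l)$; in particular $0 \le X, Y < m_l$, so that $\lc X\rc_{m_l} = X$ and $\lc Y\rc_{m_l} = Y$. This is the single place where the bound $K < m_l$ is used, and it collapses $D$ into
\[
D = \big(X - \lc X\rc_{m_j}\big) - \big(Y - \lc Y\rc_{m_j}\big).
\]

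The next step is to bound each bracketed term. Since $X \ge 0$, the residue satisfies $0 \le \lc X\rc_{m_j} \le X$, hence $X - \lc X\rc_{m_j} = m_j \lfloor X/m_j\rfloor$ lies in $[0, X] \subseteq [0, m_l)$, and the same holds for $Y$. Thus $D$ is a difference of two numbers in $[0, m_l)$, which forces $|D| < m_l$. This contradicts the hypothesis $|D| \ge m_l$, so $K \ge m_l$, as claimed.

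I do not expect a genuine obstacle here; the only points requiring care are the reduction $\lc X\rc_{m_l} = X$ (where the hypothesis $K < m_l$ enters) and the elementary fact $0 \le \lc X\rc_{m_j} \le X$ that keeps the two surviving terms inside $[0, m_l)$. It is worth noting that the argument never exploits the divisibility structure of $\mathcal{M}(K)$ beyond the containment $\mathcal{M}(K) \subseteq [0, K]$, so the statement in fact holds for arbitrary $X, Y \in [0, K]$.
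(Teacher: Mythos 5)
Your proof is correct and rests on the same elementary identity as the paper's, namely $X-\lc X\rc_{m_j}=m_j\lfloor X/m_j\rfloor\in[0,X]\subseteq[0,K]$; the paper argues the implication directly (assuming $X>Y$ and bounding $|D|\le\max_{k=l,j}(\lfloor X/m_k\rfloor-\lfloor Y/m_k\rfloor)m_k\le X\le K$), while you take the contrapositive and use $K<m_l$ to trivialize the $\lc\cdot\rc_{m_l}$ terms first, which lets you skip the ordering assumption. Your closing observation is also consistent with the paper: its proof likewise uses nothing about $X,Y\in\mathcal{M}(K)$ beyond the containment $\mathcal{M}(K)\subseteq[0,K]$.
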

    \begin{proof}
        Without loss of generality, assume $X>Y$, then $\lfloor X/m_l\rfloor \ge \lfloor Y/m_l\rfloor$ and $\lfloor X/m_j\rfloor \ge \lfloor Y/m_j\rfloor$. Thus
        \begin{equation}
            \begin{aligned}
            m_l &\le |(\lc X\rc_{m_l} - \lc X\rc_{m_j}) - (\lc Y\rc_{m_l} - \lc Y\rc_{m_j})| \\
                &= |(\lfloor X/m_l\rfloor -\lfloor Y/m_l\rfloor)m_l - (\lfloor X/m_j\rfloor -\lfloor Y/m_j\rfloor)m_j| \\
                &\le \max_{k=l,j}\lbrace (\lfloor X/m_k\rfloor -\lfloor Y/m_k\rfloor)m_k \rbrace \\
                &\le \max_{k=l,j}\lbrace \lfloor X/m_k\rfloor m_k \rbrace \le X \le K
            \end{aligned}
        \end{equation}
    \end{proof}

    \begin{thm}\label{2->1}
        \begin{equation}
            \min_{X, Y\in \mathcal{M}(K), Y\not=X}\rho(\mathbf{y, x}) = \min_{X > 0, X\in \mathcal{M}(K)} \rho(\mathbf{x, 0})
        \end{equation}
    \end{thm}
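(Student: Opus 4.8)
My plan is to prove the two inequalities separately, reserving a witness construction for the nontrivial one. The inequality $\le$ I would dispose of first: since $m_l \mid 0$ for every $l$, we have $0\in\mathcal{M}(K)$, so the pairs $(X,0)$ with $X>0$ over which the right-hand side minimizes form a subset of all distinct pairs in $\mathcal{M}(K)$; a minimum over a subset is no smaller, giving $\min_{X,Y}\rho(\mathbf{y},\mathbf{x})\le\min_{X>0}\rho(\mathbf{x},\mathbf{0})$. The real content is the reverse inequality, and the plan is to show that \emph{every} distinct pair $X,Y\in\mathcal{M}(K)$ admits a positive witness $Z\in\mathcal{M}(K)$ with $\rho(\mathbf{z},\mathbf{0})\le\rho(\mathbf{x},\mathbf{y})$; minimizing the left-hand side over all pairs then yields $\ge$, and equality follows.

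The key simplification I would record up front is that $\rho$ is a range. Setting $u_l=\lc X\rc_{m_l}-\lc Y\rc_{m_l}$, the bracket in Definition~\ref{sm} equals $u_l-u_j$, so $\rho(\mathbf{x},\mathbf{y})=\max_l u_l-\min_l u_l$ (the largest $|u_l-u_j|$ being the spread between the extreme coordinates), and likewise $\rho(\mathbf{z},\mathbf{0})=\max_l\lc Z\rc_{m_l}-\min_l\lc Z\rc_{m_l}$. Assume without loss of generality $X>Y$ and let $q$ attain $u_q=\min_l u_l$. Since $X\in\mathcal{M}(K)$ there is $g$ with $m_g\mid X$, whence $u_g=-\lc Y\rc_{m_g}\le 0$ and therefore $u_q\le 0$; moreover $|u_q|=\lc Y\rc_{m_q}-\lc X\rc_{m_q}\le\lc Y\rc_{m_q}\le Y$, a bound I will need to keep the witness inside $[0,K]$.

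For the witness I would take $Z=X-Y-u_q$. The size bound above gives $0<X-Y\le Z=(X-Y)+|u_q|\le X\le K$, and $X-Y\equiv u_q\pmod{m_q}$ forces $m_q\mid Z$, so indeed $Z\in\mathcal{M}(K)$ and $Z>0$. For each $l$ one has $Z\equiv u_l-u_q\pmod{m_l}$ with $u_l-u_q\ge 0$. In the \emph{unwrapped} regime where $u_l-u_q<m_l$ for all $l$, this congruence is an equality of residues, $\lc Z\rc_{m_l}=u_l-u_q$; its minimum is $0$ (at $l=q$), and hence $\rho(\mathbf{z},\mathbf{0})=\max_l u_l-u_q=\rho(\mathbf{x},\mathbf{y})$, which settles this regime.

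The one subtle point, and the place Lemma~\ref{lm:toproof 2->1} enters, is the \emph{wrapped} regime where $u_l-u_q\ge m_l$ for some $l$, so that $\lc Z\rc_{m_l}\ne u_l-u_q$ and the clean range identity breaks. Here I would abandon $Z$ and use the integer $m_l$ itself as the witness: applying Lemma~\ref{lm:toproof 2->1} to the index pair $(l,q)$, the hypothesis $|u_l-u_q|\ge m_l$ gives $K\ge m_l$, so $m_l\in\mathcal{M}(K)$; and since $\lc m_l\rc_{m_j}\le m_l$ for every $j$ while $\lc m_l\rc_{m_l}=0$, the residue vector $\mathbf{m}_l$ of $m_l$ obeys $\rho(\mathbf{m}_l,\mathbf{0})=\max_j\lc m_l\rc_{m_j}\le m_l\le u_l-u_q\le\rho(\mathbf{x},\mathbf{y})$. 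Either way the pair $(X,Y)$ is dominated by a positive witness in $\mathcal{M}(K)$, which proves $\ge$ and hence equality. I expect the main obstacle to be precisely this wrapped case: the naive candidate $X-Y$ fails, because its residue range can exceed $\rho(\mathbf{x},\mathbf{y})$, so one must shift by $u_q$ to zero out the $q$-th coordinate, and then observe that the breakdown of the no-wrap condition is exactly the hypothesis of Lemma~\ref{lm:toproof 2->1}, which converts the obstruction into the bound $m_l\le K$ that supplies a replacement witness.
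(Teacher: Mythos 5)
Your proof is correct and follows essentially the same route as the paper's: your witness $Z=X-Y-u_q$ is exactly the paper's $\max_l(\lfloor X/m_l\rfloor-\lfloor Y/m_l\rfloor)m_l$ in disguise, your unwrapped/wrapped dichotomy is the paper's case split on whether $u_j-u_l<m_j$ holds for all pairs, and Lemma~\ref{lm:toproof 2->1} is invoked in the same way to turn the wrapped case into the bound $K\ge m_l$ that supplies a replacement witness. The only difference is cosmetic: your observation that $\rho(\mathbf{x},\mathbf{y})=\max_l u_l-\min_l u_l$ is a \emph{range} streamlines the bookkeeping, but does not change the argument.
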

    \begin{proof}
        As it is obvious that  $\min_{X, Y\in \mathcal{M}(K), Y\not=X}\rho(\mathbf{y, x}) \leq \min_{X > 0, X\in \mathcal{M}(K)} \rho(\mathbf{x, 0})$, hence we only need to prove that $\forall X, Y\in \mathcal{M}(K), X > Y$,
        \begin{equation}
            \rho(\mathbf{y, x}) \ge \min_{X > 0, X\in \mathcal{M}(K)} \rho(\mathbf{x, 0})
        \end{equation}
        Noticing that $\rho(\mathbf{y, x})$ can be expressed as
        \begin{equation}
            \max_{1\le l<j\le L} |(\lfloor \frac {X}{m_l}\rfloor -\lfloor \frac {Y}{m_l}\rfloor)m_l - (\lfloor \frac {X}{m_j}\rfloor -\lfloor \frac {Y}{m_j}\rfloor)m_j)|
        \end{equation}
        and if there exists an integer $Z$, $Z \in\mathcal{M}(K)$, $Z>0$, which satisfies $ \lfloor Z/m_l\rfloor = \lfloor X/m_l\rfloor - \lfloor Y/m_l\rfloor$ for $\forall l$,  we have
        \begin{equation}
        \begin{aligned}
             \rho(\mathbf{y, x}) &= \max_{1\le i<j\le L} |\lfloor Z/m_l\rfloor m_l - \lfloor Z/m_j\rfloor m_j| \\
             &= \rho(\mathbf{z, 0}) \ge \min_{X > 0, X\in \mathcal{M}(K)} \rho(\mathbf{x, 0})
        \end{aligned}
        \end{equation}
        The argument above shows that if such $Z$ exists, the inequality (17) holds and the theorem holds as well. Now we prove that if for any $l, j$
        \begin{equation}
             (\lfloor X/m_l\rfloor - \lfloor Y/m_l\rfloor)m_l < (\lfloor X/m_j\rfloor - \lfloor Y/m_j\rfloor)m_j + m_j
        \end{equation}
        then $Z$ exists.
        In such case, we can select $Z$ as
        \begin{equation}
            Z = \max_l \lbrace(\lfloor X/m_l\rfloor - \lfloor Y/m_l\rfloor)m_l\rbrace
        \end{equation}
        With the fact that $X>Y$ and $X,Y\in\mathcal{M}(K)$, $Z>0$, there exists $l_0$ such that
        \begin{equation}
            Z = (\lfloor X/m_{l_0}\rfloor - \lfloor Y/m_{l_0}\rfloor)m_{l_0} \le \lfloor X/m_{l_0}\rfloor m_{l_0} \le X \le K
        \end{equation}
        Besides, by the definition of $Z$, for each $l$, we have
        \begin{equation}
            (\lfloor X/m_l\rfloor - \lfloor Y/m_l\rfloor)m_l \le Z < (\lfloor X/m_l\rfloor - \lfloor Y/m_l\rfloor)m_l + m_l
        \end{equation}
        which means $\lfloor Z/m_l\rfloor = \lfloor X/m_l \rfloor - \lfloor Y/m_l \rfloor$ for all $l$. Therefore the selected $Z$ is satisfied. On the other hand, the condition above, which guarantees such an $Z$ exists, can be simplified to
                \begin{equation}
            (\lc X\rc_{m_j} - \lc X\rc_{m_l}) - (\lc Y\rc_{m_j} - \lc Y\rc_{m_l}) < m_j ~~ \forall l, j
        \end{equation}
        Otherwise, if there exist such $l_0$ and $j_0$ such that
        \begin{equation}
            (\lc X\rc_{m_{j_0}} - \lc X\rc_{m_{l_0}}) - (\lc Y\rc_{m_{j_0}} - \lc Y \rc_{m_{l_0}}) \ge m_{j_0}
        \end{equation}
        then we obtain $K\ge m_{j_0}$ by Lemma \ref{lm:toproof 2->1}. Denoting $\min_l\{m_l\}$ as $m_0$, then
        \begin{equation}
            \min_{X>0, X\in \mathcal{M}(K)}\rho(\mathbf{x, 0}) \le \rho(\mathbf{m_0, 0}) = m_0
        \end{equation}
        Hence
        \begin{equation}
        \begin{aligned}
            \rho(\mathbf{y, x})
            &\ge |(\lc X\rc_{m_{l_0}} - \lc X\rc_{m_{j_0}}) - (\lc Y\rc_{m_{l_0}} - \lc Y\rc_{m_{j_0}})|\\
            &\ge m_{j_0} \ge m_0 \ge \min_{X>0, X\in \mathcal{M}(K)}\rho(\mathbf{x, 0})
        \end{aligned}
        \end{equation}
        \end{proof}
          To make a brief summary, we obtain an error bound $\delta(K)$ for the dynamic range $K$ in Corollary 2 and develop a series of reduction to further simplify it. In the above theorem, we prove that the minimal distance between any two residue vectors is equivalent to that between the residue vector of $\mathcal{M}(K)$ and $\mathbf{0}$. In the following, we take a closer look at $\min_{X > 0, X\in \mathcal{M}(K)} \rho(\mathbf{x, 0})$ and show the relationship between $\delta(K)$ and the components of residue vectors.
        \begin{lm}\label{better formula lm}
            For any $km_t \in \mathcal{M}(K)$
            \begin{equation}
                \rho(\mathbf{km_t, 0}) = \max_l \lc km_t\rc_{m_l}
            \end{equation}
            where $\mathbf{km_t} $ denotes the residue vector of $km_t$.
        \end{lm}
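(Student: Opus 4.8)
The plan is to first simplify $\rho(\mathbf{km_t},\mathbf{0})$ using the fact that the residue vector of $0$ is the zero vector, i.e. $\lc 0\rc_{m_l}=0$ for every $l$. Substituting into Definition \ref{sm} collapses the inner expression, so that
$$
\rho(\mathbf{km_t},\mathbf{0}) = \max_{1\le l<j\le L}\bigl|\lc km_t\rc_{m_l}-\lc km_t\rc_{m_j}\bigr|.
$$
From here I would establish the claimed identity by proving two matching inequalities, one in each direction.

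For the lower bound, the key observation is that since $m_t \mid km_t$ we have $\lc km_t\rc_{m_t}=0$. Choosing an index $l^{*}$ that attains $\max_l \lc km_t\rc_{m_l}$ and pairing it with the index $t$ produces the term $|\lc km_t\rc_{m_{l^{*}}}-\lc km_t\rc_{m_t}| = \lc km_t\rc_{m_{l^{*}}} = \max_l \lc km_t\rc_{m_l}$ inside the maximum defining $\rho(\mathbf{km_t},\mathbf{0})$. Since the maximum dominates any single term, this yields $\rho(\mathbf{km_t},\mathbf{0})\ge \max_l \lc km_t\rc_{m_l}$.

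For the upper bound, I would invoke the elementary fact that for nonnegative reals $a,b$ one has $|a-b|\le\max(a,b)$. Applying it to each pair of residues, which are all nonnegative, gives $|\lc km_t\rc_{m_l}-\lc km_t\rc_{m_j}|\le \max(\lc km_t\rc_{m_l},\lc km_t\rc_{m_j})\le \max_p \lc km_t\rc_{m_p}$ for every pair $l,j$. Taking the maximum over all such pairs shows $\rho(\mathbf{km_t},\mathbf{0})\le \max_l \lc km_t\rc_{m_l}$. Combining the two bounds gives the stated equality.

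The one point that genuinely must be handled with care is the anchoring role of the modulus $m_t$: it is precisely the vanishing residue $\lc km_t\rc_{m_t}=0$ that converts the pairwise-difference maximum into a plain maximum of residues and secures the lower bound. This is why the lemma is stated only for elements of $\mathcal{M}(K)$ (multiples of some modulus) rather than for arbitrary $X$. The remaining content is the routine nonnegative-number inequality, so I do not anticipate any serious obstacle beyond keeping this reduction explicit.
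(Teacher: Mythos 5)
Your proof is correct and follows essentially the same route as the paper's: both arguments hinge on the vanishing residue $\lc km_t\rc_{m_t}=0$ to extract $\max_l \lc km_t\rc_{m_l}$ as one of the pairwise differences, and on the bound $|a-b|\le\max(a,b)$ for nonnegative residues to control the pairs with $l,j\ne t$. Your two-inequality organization is merely a cleaner write-up of the paper's terse case split, and your explicit remark about why membership in $\mathcal{M}(K)$ matters is a welcome addition.
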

        \begin{proof}
            Note that
            \begin{equation}
                \rho(\mathbf{km_t, 0}) = \max_{1\le l<j\le L}|\lc km_t\rc_{m_l} - \lc km_t\rc_{m_j}|
            \end{equation}
            and
            \begin{equation}
                \max_{1\le l<j\le L; l,j\ne t}|\lc km_t\rc_{m_l}-\lc km_t\rc_{m_j}| \le \max_{l\ne t}\lc km_t\rc_{m_l}
            \end{equation}
            Under the case of $l=t$ or $j=t$ or the case of $l,j\ne t$, the above lemma holds.
        \end{proof}
      Lemma \ref{better formula lm} also indicates that when a common factor $\Gamma$ is introduced in each modulus $m_l$, i.e., ${m_l}'=\Gamma m_l$, then $\lc k\Gamma m_t\rc_{\Gamma m_l} = \Gamma \lc km_t\rc_{m_l}$. Therefore, the minimal distance $\min_{X, Y\in \mathcal{M}(K), Y\not=X}\rho(\mathbf{y, x})$ with the moduli ${m_l}'$ is $\Gamma$ times larger than that with the moduli ${m_l}$.
                \begin{thm} \label{1-d form theorem}
            Denoting the minimum modulus as $m_0$, then
            \begin{equation}
                \delta(K) = \min_{m_0\le X\le K}\max_{l} \frac {\lc X\rc_{m_l}}{4}
            \end{equation}
        \end{thm}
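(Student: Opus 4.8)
The plan is to assemble the chain of reductions already proved (Corollary 2, Theorem~\ref{2->1}, and Lemma~\ref{better formula lm}) into a single one-dimensional formula, and then to show that the minimization may be enlarged from the discrete set $\mathcal{M}(K)$ to the whole integer interval $[m_0, K]$ without changing its value.

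First I would record what the earlier results give directly. By the definition $4\delta(K) = \min_{X,Y\in\mathcal{M}(K),\, X\ne Y}\rho(\mathbf{x},\mathbf{y})$, Theorem~\ref{2->1} collapses this to $4\delta(K) = \min_{X>0,\, X\in\mathcal{M}(K)}\rho(\mathbf{x},\mathbf{0})$, and Lemma~\ref{better formula lm} rewrites each term as $\rho(\mathbf{x},\mathbf{0}) = \max_l \langle X\rangle_{m_l}$. Hence $\delta(K) = \min_{X>0,\, X\in\mathcal{M}(K)} \max_l \frac{\langle X\rangle_{m_l}}{4}$, a minimum taken over the positive multiples of moduli that do not exceed $K$. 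The claimed identity therefore amounts to replacing this index set by the full interval $[m_0,K]$.

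For one inequality this is immediate. Every positive element of $\mathcal{M}(K)$ is a multiple of some modulus lying in $[0,K]$, and the smallest such multiple is the minimum modulus $m_0$, so $\mathcal{M}(K)\cap(0,K]\subseteq\{m_0,\dots,K\}$. Since the right-hand side of the theorem minimizes the \emph{same} objective $\max_l \frac{\langle X\rangle_{m_l}}{4}$ over the larger set $[m_0,K]$, enlarging the domain can only decrease the minimum, giving $\min_{m_0\le X\le K}\max_l \frac{\langle X\rangle_{m_l}}{4} \le \delta(K)$. The reverse inequality is the crux: I would show that a minimizer over $[m_0,K]$ can always be chosen inside $\mathcal{M}(K)$. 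Let $X^{*}$ attain $\min_{m_0\le X\le K}\max_l\langle X\rangle_{m_l}$, and suppose no modulus divides $X^{*}$. Then $\langle X^{*}\rangle_{m_l}\ge 1$ for every $l$, so decreasing by one lowers each residue by exactly one without wrap-around, i.e. $\langle X^{*}-1\rangle_{m_l} = \langle X^{*}\rangle_{m_l}-1$, and therefore $\max_l\langle X^{*}-1\rangle_{m_l} = \max_l\langle X^{*}\rangle_{m_l}-1$. Because $m_0$ is itself a multiple of $m_0$, the non-divisibility assumption forces $X^{*}>m_0$, so $X^{*}-1\in[m_0,K]$ and it strictly beats $X^{*}$, contradicting minimality. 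Hence some $m_l\mid X^{*}$, so $X^{*}\in\mathcal{M}(K)$ with $X^{*}>0$, and consequently $\min_{m_0\le X\le K}\max_l \frac{\langle X\rangle_{m_l}}{4} = \max_l \frac{\langle X^{*}\rangle_{m_l}}{4} \ge \delta(K)$. Combining the two inequalities yields the asserted equality.

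The main obstacle is this last step, and specifically the legitimacy of the \emph{shift-down-by-one} move: one must check that all residues being positive is exactly equivalent to $X^{*}$ being divisible by no modulus, that subtracting one decrements every residue simultaneously (no residue wraps around to $m_l-1$), and that the boundary case $X^{*}=m_0$ is excluded so that $X^{*}-1\ge m_0$ stays in range. Once these verifications are in place, the rest is routine bookkeeping over the already-established lemmas.
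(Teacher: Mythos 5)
Your proposal is correct and takes essentially the same route as the paper: both use the chain Corollary~2 $\to$ Theorem~\ref{2->1} $\to$ Lemma~\ref{better formula lm} to reduce $\delta(K)$ to $\min_{X>0,\,X\in\mathcal{M}(K)}\max_l \langle X\rangle_{m_l}/4$, note the trivial inequality from $\mathcal{M}(K)\cap(0,K]\subseteq[m_0,K]$, and then prove the reverse inequality by showing the interval minimum is attained on $\mathcal{M}(K)$. The only (immaterial) difference is in that last step: the paper compares a general $X\in[m_0,K]$ directly with the largest $X_0\in\mathcal{M}(K)$ not exceeding it, whereas you use a shift-down-by-one descent at a minimizer; your argument is sound, including the boundary check that a non-multiple $X^{*}$ must exceed $m_0$.
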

        \begin{proof}
           Based on the definition of $\delta(K)$, recalling Theorem \ref{2->1} and Lemma \ref{better formula lm} proposed above, we have
            \begin{equation}
                \delta(K) = \min_{X>0, X\in \mathcal{M}(K)}\frac {\rho(\mathbf{x, 0})}4 = \min_{X>0, X \in \mathcal{M}(K)}\max_{l} \frac {\lc X\rc_{m_l}}{4}
            \end{equation}
            For each integer $X, m_0\le X\le K$, there exists the largest integer $X_0 \in \mathcal{M}(K)$, $0<X_0=km_{l_0}\le X$. Then we have
            $\lc X\rc_{m_{l_0}} \le \lc X\rc_{m_l}$ for every $l \ne l_0$. Therefore,
            \begin{equation}
            \begin{aligned}
                \max_{l}\lc X\rc_{m_l} &= \max_{l\ne l_0}\lc X\rc_{m_l} \ge \max_{l\ne l_0}\lc km_{l_0}\rc_{m_l} \\ &= \max_{l} \lc X_0 \rc_{m_l}
                \ge \min_{Y>0, Y \in \mathcal{M}(K)}\max_{l}\lc Y\rc_{m_l}
            \end{aligned}
            \end{equation}
            Hence
            \begin{equation}
                \min_{m_0\le X \le K}\max_{l} \frac {\lc X \rc_{m_l}}{4} \ge \min_{Y>0, Y \in \mathcal{M}(K)}\max_{l} \frac {\lc Y \rc_{m_l}}{4}
            \end{equation}
            On the other hand, it's obvious that
            \begin{equation}
                \min_{m_0\le X \le K}\max_{l} \frac {\lc X \rc_{m_l}}{4} \le \min_{Y>0, Y \in \mathcal{M}(K)}\max_{l} \frac {\lc Y \rc_{m_l}}{4}
            \end{equation}
        \end{proof}
        \begin{cor}\label{probab cor1}
            With a given error bound $\delta$, the dynamic range $K$ is the largest integer $x$ such that for each integer $X\in[m_0, x]$, $\exists l, \lc X\rc_{m_l}\ge 4\delta$.
        \end{cor}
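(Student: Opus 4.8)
The plan is to read this corollary as the inversion of Theorem \ref{1-d form theorem}. That theorem gives the maximal tolerable error $\delta(K)$ as an explicit function of the dynamic range $K$, namely $\delta(K)=\min_{m_0\le X\le K}\max_l \lc X\rc_{m_l}/4$, whereas here $\delta$ is prescribed and we must read off the largest compatible $K$. First I would record the admissibility criterion: by Corollary 2 the quantity $\delta(K)$ is exactly the robustness bound for dynamic range $K$, so robust reconstruction under a prescribed per-residue error bound $\delta$ is possible for a dynamic range $K$ if and only if $\delta\le\delta(K)$.

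Next I would unfold this criterion using the closed form of Theorem \ref{1-d form theorem}. The inequality $\delta\le\delta(K)$ reads
$$\delta\le\min_{m_0\le X\le K}\max_l \frac{\lc X\rc_{m_l}}{4},$$
and a minimum of nonnegative quantities is $\ge\delta$ exactly when each of them is, so this is equivalent to requiring $\max_l \lc X\rc_{m_l}\ge 4\delta$ for every integer $X\in[m_0,K]$, i.e.\ that for every such $X$ there exists $l$ with $\lc X\rc_{m_l}\ge 4\delta$. Note that both the lower endpoint $m_0$ and the ranging over all integers (not merely the multiples in $\mathcal{M}(K)$) are inherited directly from Theorem \ref{1-d form theorem}, so neither needs separate justification here.

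It remains to see that the admissible values of $K$ form a down-closed set, so that ``the largest $x$ for which the condition holds on $[m_0,x]$'' is well posed and equals the dynamic range. The key observation is monotonicity: enlarging $K$ only enlarges the index set $[m_0,K]$ over which the minimum defining $\delta(K)$ is taken, hence $\delta(K)$ is non-increasing in $K$. Equivalently, as soon as the condition ``$\exists l,\ \lc X\rc_{m_l}\ge 4\delta$'' first fails at some integer $X^{\dagger}$, we get $\delta(K)\le\max_l\lc X^{\dagger}\rc_{m_l}/4<\delta$ for every $K\ge X^{\dagger}$, so no dynamic range reaching $X^{\dagger}$ is admissible. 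Therefore the admissible $K$ are precisely those integers $x$ for which the condition holds throughout $[m_0,x]$, and the sought dynamic range is the largest such $x$. I do not expect a genuine obstacle: the mathematical content is supplied entirely by Theorem \ref{1-d form theorem}, and the only care required is bookkeeping---aligning the strict inequalities in the robustness definition with the non-strict ``$\ge 4\delta$'' appearing in the statement, and invoking the monotonicity remark that makes the ``largest $x$'' threshold meaningful.
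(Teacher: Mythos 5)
Your proposal is correct and follows exactly the route the paper intends: the corollary is stated without proof as an immediate inversion of Theorem \ref{1-d form theorem}, and your argument simply makes explicit the equivalence $\delta\le\delta(K)\iff\forall X\in[m_0,K]\,\exists l:\lc X\rc_{m_l}\ge 4\delta$ together with the monotonicity of $\delta(K)$ that makes ``the largest such $x$'' well posed. The only added value over the paper's (absent) proof is that you justify the down-closedness of the admissible set, which is a worthwhile but minor piece of bookkeeping.
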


        \begin{rmk}\label{rmk-expression}
            The equation (31) has an alternate form as follows,
            \begin{equation}
                4\delta(K) = \min_{\min(\mathscr{M})=m_0\le x\le K}||\mathbf{x}||_\infty
            \end{equation}
            From the equation (36), it is easy to verify that $4\delta(K)$ is a positive integer when $K<lcm(\mathscr{M})$. In addition, $\delta(K)$ is a discrete monotone decreasing function of $K$. We denote $\delta_n$ as the $n^{th}$ value as $K$ increases. Correspondingly, $K_n$ is the minimal integer such that $\delta(K_n-1)=\delta_{n}$.
        \end{rmk}

        \textbf{Example:} When $m_1 = 11\times 15, m_2 = 11\times 31, m_3 = 11\times 24$, the maximum dynamic range $K$ is $40920$. Then
        \begin{align*}
        K_1 &= 165, \delta_1 = 165/4; K_2 = 341, \delta_2=77/4\\
            K_3 &= 1056, \delta_3=66/4;K_4=1364, \delta_4=44/4\\K_5 &= 4785, \delta_5=33/4;K_6=10571,\delta_6=11/4
        \end{align*}
        \begin{cor}\label{rmk x dim}
            From Corollary \ref{probab cor1}, $K$ is the smallest integer such that $\forall l, \lc K+1\rc_{m_l} < 4\delta$. Hence for each $l$, there exists a multiple of $m_l$ in the set $\{(K+1)-4\delta+1, (K+1)-4\delta+2,...,(K+1)-1,K+1\}$. The inverse is also true, i.e., $K$ is the smallest integer such that the set above contains at least one multiple of each modulus.
        \end{cor}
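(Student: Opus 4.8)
The plan is to reduce the entire corollary to one elementary observation --- that a block of $4\delta$ consecutive integers ending at $y$ contains a multiple of $m_l$ exactly when $\lc y\rc_{m_l}<4\delta$ --- and then transport the characterization of $K$ from Corollary \ref{probab cor1} through this equivalence. First I would record the opening sentence: by Corollary \ref{probab cor1} every $X\in[m_0,K]$ satisfies $\max_l\lc X\rc_{m_l}\ge 4\delta$ while $K$ is the largest such integer, so $K+1$ is the smallest integer $\ge m_0$ for which all residues drop below $4\delta$, i.e. $\lc K+1\rc_{m_l}<4\delta$ for every $l$. Here $4\delta=4\delta(K)$ is a positive integer by Remark \ref{rmk-expression} (since $K<\mathrm{lcm}(\mathscr{M})$), and from Theorem \ref{1-d form theorem} evaluated at $X=m_0$ I also get $4\delta\le\max_l\lc m_0\rc_{m_l}=m_0\le K$, a bound I will keep in reserve to ensure all the windows stay strictly positive.

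For the forward (``Hence'') direction I would simply exhibit the multiple. Fixing $l$ and writing $q_l=\lfloor(K+1)/m_l\rfloor$, one has $q_lm_l=(K+1)-\lc K+1\rc_{m_l}$, and since $0\le\lc K+1\rc_{m_l}<4\delta$ this multiple lies in $\{(K+1)-4\delta+1,\dots,K+1\}$. As $l$ was arbitrary, the window of length $4\delta$ ending at $K+1$ meets the class $0\bmod m_l$ for every modulus.

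The inverse carries the real content, and it rests on the two-sided equivalence: for $y$ with $y-4\delta+1>0$, the set $W(y)=\{y-4\delta+1,\dots,y\}$ contains a multiple of $m_l$ if and only if $\lc y\rc_{m_l}<4\delta$. The ``if'' part is the computation just made. For ``only if'' I would note that the largest multiple of $m_l$ not exceeding $y$ is $y-\lc y\rc_{m_l}$, while the next one, $y-\lc y\rc_{m_l}+m_l>y$, already lies above $W(y)$; hence if $\lc y\rc_{m_l}\ge 4\delta$ the only candidate $y-\lc y\rc_{m_l}\le y-4\delta$ sits strictly below $W(y)$ and no multiple survives. The positivity $y-4\delta+1>0$, guaranteed by the reserved bound $4\delta\le m_0\le K$, is exactly what forbids $0$ from being counted as a spurious common multiple. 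Intersecting over $l$, ``$W(y)$ contains a multiple of each modulus'' is equivalent to ``$\lc y\rc_{m_l}<4\delta$ for all $l$''; applying this at $y=K+1$ and invoking the minimality of $K+1$ established in the first paragraph gives that $K$ is the smallest integer (in the operative range $\ge m_0$) whose associated window contains a multiple of every modulus.

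The main obstacle I anticipate is not any single inequality but the boundary bookkeeping: one must ensure the window remains positive so that $0$ is not accidentally a common multiple (otherwise every small $y$ would be trivially ``good'' and the minimality claim would collapse), and one must reconcile the off-by-one between ``largest $x$'' in Corollary \ref{probab cor1} and ``smallest $K+1$'' here. Both are disposed of by the single structural estimate $4\delta\le m_0\le K$, so I would foreground that bound at the outset and let the remainder follow mechanically from the residue--window equivalence.
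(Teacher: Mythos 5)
Your proposal is correct, and it follows the route the paper evidently intends: the corollary is stated without any proof, as an immediate consequence of Corollary~\ref{probab cor1}, and your argument is exactly the missing justification --- translate ``$\forall l,\ \lc K+1\rc_{m_l}<4\delta$'' into ``the length-$4\delta$ window ending at $K+1$ meets every residue class $0 \bmod m_l$'' via the elementary two-sided equivalence, then transport the minimality of $K+1$ through it. The one ingredient you add that the paper leaves entirely implicit is the bound $4\delta \le \max_l \lc m_0\rc_{m_l} = m_0 \le K$ from Theorem~\ref{1-d form theorem}, which keeps the window strictly positive and rules out $0$ being counted as a spurious common multiple; this is needed for the ``only if'' half and for the minimality claim, and your handling of it is sound.
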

        \begin{rmk}
             Following the idea of Corollary \ref{rmk x dim}, we can estimate the dynamic range $K$. A lower bound of $K$ is $\sqrt[4\delta]{lcm(\mathscr{M})}$. Moreover, when all of the moduli are primes, $p_1, p_2, ..., p_L$, $K$ is upper bounded by $\frac {p_1p_2p_3...p_L}{(1+2\delta)^L}$.
        \end{rmk}

\end{section}
\begin{section}{Further Analysis under the Case of Two Moduli}
\noindent When $card(\mathscr{M}) = 2$, a simple recursive formula for $\{\delta_n\}$ and $\{K_n\}$ can be derived. With the analysis for general $L$ above, we give a short proof to the conclusions obtained in \cite{2017towards} and point out how to generalize the encoding system. First, we give a lemma below.
    \begin{lm} \label{2-d lemma}
        When $card(\mathscr{M})=2$ and $n\ge 2$, if $\rho(\mathbf{k_1m_1,k_2m_1})<\delta_{n-1}$, then, $|k_1-k_2|m_1\ge K_n-\delta_n$.
    \end{lm}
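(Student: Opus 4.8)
The plan is to reduce the symmetric quantity $\rho(\mathbf{k_1m_1},\mathbf{k_2m_1})$ to a one-dimensional statement about a single multiple of $m_1$, and then to locate that multiple relative to $K_n$ through the window characterization of Corollary~\ref{rmk x dim}. With only two moduli, $\rho$ collapses to the single term $(l,j)=(1,2)$; using $\lc k_im_1\rc_{m_1}=0$ (equivalently Lemma~\ref{better formula lm}) this gives
\[
\rho(\mathbf{k_1m_1},\mathbf{k_2m_1})=|\lc k_1m_1\rc_{m_2}-\lc k_2m_1\rc_{m_2}|.
\]
Assuming $k_1>k_2$ without loss of generality and writing $Z=(k_1-k_2)m_1$, the right-hand side is exactly the circular distance of $\lc Z\rc_{m_2}$ from $0$ (here one uses $\delta_{n-1}\le\delta_1=m_0/4<m_2/2$). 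Hence the hypothesis $\rho<\delta_{n-1}$ is equivalent to $\lc Z\rc_{m_2}\in[0,\delta_{n-1})\cup(m_2-\delta_{n-1},m_2)$: that is, $Z$ is a multiple of $m_1$ lying within $\delta_{n-1}$ of a multiple of $m_2$, and the goal becomes a sharp lower bound on this particular multiple.

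The main engine is Corollary~\ref{rmk x dim}, read as: $K_n$ is the smallest integer $x$ for which the length-$4\delta_{n-1}$ window $\{x-4\delta_{n-1}+1,\dots,x\}$ simultaneously contains a multiple of $m_1$ and a multiple of $m_2$. In the first case $\lc Z\rc_{m_2}<\delta_{n-1}$, the multiple of $m_2$ just below $Z$ is $Z-\lc Z\rc_{m_2}$, and since $\lc Z\rc_{m_2}<\delta_{n-1}<4\delta_{n-1}$ both it and $Z$ lie in the window ending at $Z$; minimality forces $Z\ge K_n\ge K_n-\delta_n$, which is stronger than needed. In the second case $\lc Z\rc_{m_2}>m_2-\delta_{n-1}$, I would pass to the nearest multiple of $m_2$ above $Z$, namely $W=Z+(m_2-\lc Z\rc_{m_2})$ with $0<W-Z<\delta_{n-1}$; applying the same window argument at $W$ yields $W\ge K_n$, whence $Z=W-(W-Z)>K_n-\delta_{n-1}$.

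The hard part is closing the gap between the crude bound $Z>K_n-\delta_{n-1}$ and the claimed $Z\ge K_n-\delta_n$, which differ because $\delta_{n-1}>\delta_n$. The difference is felt only in the tight subcase where $W=K_n$ and $K_n$ is itself a multiple of $m_2$: there $W-Z=\lc K_n\rc_{m_1}=4\delta_n$, so the estimate survives precisely when the record cannot collapse too abruptly at a value realised by a multiple of $m_2$. Concretely, I would establish the auxiliary bound that whenever the $n$-th record $4\delta_n$ ($n\ge2$) is attained at a multiple of $m_2$, one has $\delta_{n-1}\le 4\delta_n$; this makes the hypothesis $\rho<\delta_{n-1}$ incompatible with the offending configuration and forces $W-Z\le\delta_n$, giving $Z\ge K_n-\delta_n$. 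This record-comparison inequality is the real obstacle: it does not follow from the window characterization alone and requires analysing the interleaving of the multiples of $m_1$ and $m_2$, i.e. a three-gap / continued-fraction argument on the residue sequences $\lc am_1\rc_{m_2}$ and $\lc bm_2\rc_{m_1}$. It is moreover the natural place from which the recursion for $\{\delta_n\}$ and $\{K_n\}$ announced at the start of the section emerges, so I would develop the two analyses in tandem.
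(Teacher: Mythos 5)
Your reduction of $\rho(\mathbf{k_1m_1,k_2m_1})$ to the single quantity $|\lc k_1m_1\rc_{m_2}-\lc k_2m_1\rc_{m_2}|$, the case split on whether $\lc Z\rc_{m_2}$ is near $0$ or near $m_2$ (with $Z=(k_1-k_2)m_1$), and the passage to the multiple $W$ of $m_2$ just above $Z$ all coincide with the paper's proof, and your Case 1 is complete. The gap is the last step of Case 2, which you correctly identify as the crux but do not close. Worse, the auxiliary ``record-comparison inequality'' you propose to close it --- that $\delta_{n-1}\le 4\delta_n$ whenever the $n$-th record is attained at a multiple of $m_2$ --- is false in general: take $m_1=100$, $m_2=101$, where $\delta_1$ is (up to the section's normalization) $\lc m_1\rc_{m_2}=100$ while $\delta_2=\lc m_2\rc_{m_1}=1$ is attained at $K_2=m_2$, so consecutive records can collapse by an arbitrarily large factor. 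No three-gap or continued-fraction analysis is needed, and the recursion for $\{\delta_n\},\{K_n\}$ is derived in the paper \emph{from} this lemma, not alongside it.

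The missing observation is elementary and is where the paper goes. Since $W$ is the multiple of $m_2$ immediately above $Z$ and $0<m_2-\lc Z\rc_{m_2}<\delta_{n-1}\le\delta_1=m_1$, Lemma~\ref{better formula lm} gives $\rho(\mathbf{w,0})=\lc W\rc_{m_1}=\lc m_2-\lc Z\rc_{m_2}\rc_{m_1}=m_2-\lc Z\rc_{m_2}=W-Z$; that is, $Z=f(W)$ for $f(X)=X-\rho(\mathbf{x,0})$. The map $k\mapsto f(km_2)$ is increasing (indeed $f(km_2)>km_2-m_2\ge f((k-1)m_2)$), so $W\ge K_n$ yields $Z=f(W)\ge K_n-\rho(\mathbf{K_n,0})=K_n-\delta_n$. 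In particular, the ``offending configuration'' you worry about ($W=K_n$ with $m_2\mid K_n$) is exactly the equality case $Z=K_n-\delta_n$, not an obstruction: the quantity $W-Z$ is not an independent error of size up to $\delta_{n-1}$ but is literally equal to $\rho(\mathbf{w,0})$, and that identity is precisely the slack your crude bound $Z>K_n-\delta_{n-1}$ discards. Treating $W-Z$ and $\rho(\mathbf{w,0})$ as unrelated is the conceptual misstep that sent you toward the (false) number-theoretic lemma.
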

    \begin{proof}
        Rewrite $k_1m_1$ and $k_2m_1$ as, $k_1m_1 = q_1m_2 + \lc k_1m_1\rc_{m_2}$ and $k_2m_1 = q_2m_2 +\lc k_2m_1\rc_{m_2}$. Without loss of generality, assume that $k_1>k_2$, hence $q_1\ge q_2$. Then we have
        \begin{equation}
            (k_1-k_2)m_1 = (q_1-q_2)m_2+\lc k_1m_1\rc_{m_2}-\lc k_2m_1\rc_{m_2}
        \end{equation}
        \textbf{Case 1} If $\lc k_1m_1\rc_{m_2}\ge \lc k_2m_1\rc_{m_2}$, then
        \begin{equation}
        \begin{aligned}
            \delta_{n-1} > |\lc k_1m_1\rc_{m_2}-\lc k_2m_1\rc_{m_2}|&=\lc(k_1-k_2)m_1\rc_{m_2}\\&=\rho(\mathbf{(k_1-k_2)m_1, 0})
        \end{aligned}
        \end{equation}
        According to Theorem \ref{2->1} and Remark \ref{rmk-expression}, if $(k_1-k_2)m_1<K_n$, then $\rho(\mathbf{(k_1-k_2)m_1, 0})\ge \delta_{n-1}$, which leads to contradiction. Hence we have $(k_1-k_2)m_1\ge K_n\ge K_n-\delta_n$. \\
       \textbf{Case 2} If $\lc k_1m_1\rc_{m_2}< \lc k_2m_1\rc_{m_2}$, we have
               \begin{equation}
                   |\lc k_1m_1\rc_{m_2}- \lc k_2m_1\rc_{m_2}|=m_2-\lc(k_1-k_2)m_1\rc_{m_2}<\delta_{n-1}
               \end{equation}
               Denote $(\lfloor\frac {(k_1-k_2)m_1}{m_2}\rfloor +1)m_2$ as $A$. Note that
               \begin{equation}
               \begin{aligned}
                   \rho(\mathbf{a, 0}) &=\lc(k_1-k_2)m_1-\lc(k_1-k_2)m_1\rc_{m_2}+m_2\rc_{m_1}\\
                   &=\lc m_2-\lc(k_1-k_2)m_1\rc_{m_2}\rc_{m_1}\\
                   &= m_2-\lc(k_1-k_2)m_1\rc_{m_2}<\delta_{n-1}
               \end{aligned}
               \end{equation}
               With a similar argument in $\textbf{Case 1}$, we have $A \ge K_n$ and
               \begin{equation}
                   (k_1-k_2)m_1=A-\rho(\mathbf{a,0}).
               \end{equation}
               Considering the function $f(X)=X-\rho(\mathbf{x,0})$, it is not hard to verify that $f(km_1)$ and $f(km_2)$ are both monotone increasing functions of $k$.
               Therefore
               \begin{equation}
                   (k_1-k_2)m_1\ge K_n-\rho(\mathbf{K_n, 0})=K_n-\delta_n
               \end{equation}
    \end{proof}
    Without loss of generality assuming that $m_2 > m_1$, it is not hard to figure out that $\delta_0 = m_2$, $\delta_1=m_1$, $K_1=m_1$ and $K_2=m_2$. Let $K_0=m_2$ and we have the following theorem.
    \begin{thm}
       For $n =1,2,3, ...$
      \begin{equation}
       \delta_{n+1} = \lc \delta_{n-1}\rc_{\delta_n}
    \end{equation}
    \begin{equation}
       K_{n+1} = K_{n-1} + (K_n-\delta_n)\lfloor \frac{\delta_{n-1}}{\delta_n}\rfloor
    \end{equation}
    \end{thm}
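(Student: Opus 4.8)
The plan is to prove the two recursions together by strong induction on $n$, carrying along a structural description of the record minimiser $K_n$. Recall that for two moduli, Remark~\ref{rmk-expression}, Theorem~\ref{2->1} and Lemma~\ref{better formula lm} reduce the problem to $4\delta(K)=\min_{0<X\le K}\max(\lc X\rc_{m_1},\lc X\rc_{m_2})$ with the minimum attained at a multiple of $m_1$ or of $m_2$; at such a multiple one residue vanishes, so the quantity minimised is exactly the residue with respect to the \emph{other} modulus. Thus $\delta_n$ is the $n$-th running minimum (record) of the two interleaved sequences $\lc km_1\rc_{m_2}$ and $\lc km_2\rc_{m_1}$, and $K_n$ is the smallest argument realising it. The invariant I propagate is that the records strictly alternate between the two classes: writing $\{a,b\}=\{1,2\}$, at one level $K_n$ is a multiple of $m_a$ with $\lc K_n\rc_{m_a}=0$ and $\lc K_n\rc_{m_b}=\delta_n$, and at the next level it is a multiple of $m_b$ with the roles of $m_a,m_b$ exchanged. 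The base data $\delta_0=m_2,\ \delta_1=m_1,\ K_0=K_2=m_2,\ K_1=m_1$ fix the induction, and one checks directly that the first record below $m_1$ occurs at $m_2$ with value $\lc m_2\rc_{m_1}$, which is the case $n=1$.

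The core of the argument is to build the next record explicitly from the previous two. Put $s_n=K_n-\delta_n=K_n-\lc K_n\rc_{m_a}$, a multiple of $m_a$, which moreover satisfies $s_n\equiv-\delta_n \pmod{m_b}$ because $m_b\mid K_n$. Hence the points $P_i=K_{n-1}+i\,s_n$ are all multiples of $m_a$ (like $K_{n-1}$), and their residue modulo $m_b$ equals $\delta_{n-1}-i\delta_n$, with no wraparound as long as this stays nonnegative. For $i=0,1,\dots,q$ with $q=\lfloor\delta_{n-1}/\delta_n\rfloor$, the residues $\delta_{n-1}-i\delta_n$ remain $\ge\delta_n$ while $i<q$ and become $\delta_{n-1}-q\delta_n=\lc\delta_{n-1}\rc_{\delta_n}\in[0,\delta_n)$ at $i=q$. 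Reading off $i=q$ yields at once
$$\delta_{n+1}=\lc\delta_{n-1}\rc_{\delta_n}, \qquad K_{n+1}=P_q=K_{n-1}+(K_n-\delta_n)\Big\lfloor\frac{\delta_{n-1}}{\delta_n}\Big\rfloor,$$
which are the two claimed formulas; the point $K_{n+1}$ is a multiple of $m_a$ with residue $\delta_{n+1}$ modulo $m_b$, so the alternation invariant is preserved and the induction is self-sustaining. This existence direction is routine once the invariant is in force.

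The real work is minimality: showing that no multiple $X$ with $K_n<X<K_{n+1}$ achieves $\max(\lc X\rc_{m_1},\lc X\rc_{m_2})<\delta_n$, so that $K_{n+1}$ is genuinely the next record. When $X$ is a multiple of $m_a$ (the class of $K_{n+1}$), this follows cleanly from a packing argument based on the separation bound of Lemma~\ref{2-d lemma} for the modulus $m_a$: comparing $X$ with each $P_i$, the residues modulo $m_b$ differ by strictly less than $\delta_{n-1}$ (using $\lc X\rc_{m_b}<\delta_n\le\delta_{n-1}$, while $\lc X\rc_{m_b}=0$ is impossible since it would force $X\ge\mathrm{lcm}(\mathscr{M})$), so the lemma forces $|X-P_i|\ge K_n-\delta_n$ for every $i$; but the $P_i$ are equally spaced with gap exactly $K_n-\delta_n$ and cover $[K_{n-1},K_{n+1}]$, so any $X$ strictly inside this interval lies within less than $K_n-\delta_n$ of some $P_i$, a contradiction. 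The main obstacle is the complementary case, where $X$ is a multiple of $m_b$ with $\lc X\rc_{m_a}<\delta_n$: here I must show that the next record of the $m_b$-class does not fall inside $(K_n,K_{n+1})$. This requires the version of Lemma~\ref{2-d lemma} with $m_1$ and $m_2$ interchanged together with a lockstep induction that advances both residue classes simultaneously, exploiting the complementary-multiple device already used in the proof of Lemma~\ref{2-d lemma} (which converts a near-miss multiple of one modulus into a multiple of the other). Pinning down that the $m_a$-advance $K_{n+1}$ always precedes the $m_b$-advance, so that the combined record sequence is the one-step-shifted Euclidean sequence, is the step I expect to be the most delicate; everything else reduces to the explicit construction and the metric reductions established earlier.
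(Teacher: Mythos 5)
Your route is the paper's route: the same alternation invariant (the record points alternate between multiples of $m_1$ and of $m_2$, with $m_a\mid K_n-\delta_n$ when $m_b\mid K_n$), the same arithmetic progression of candidates $P_i=K_{n-1}+i(K_n-\delta_n)$ whose residues modulo $m_b$ step down by $\delta_n$ from $\delta_{n-1}$, the same read-off of both formulas at $i=\lfloor\delta_{n-1}/\delta_n\rfloor$, and the same use of Lemma~\ref{2-d lemma} to show that the $P_i$ exhaust the record-beating multiples of $m_a$ below $K_{n+1}$. The construction and the same-class minimality argument are sound, including the observation that $\lc X\rc_{m_b}=0$ is impossible below $\text{lcm}(\mathscr{M})$, which is what makes the inequality strict enough for Lemma~\ref{2-d lemma} to apply.

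The gap is the one you name and then defer: nothing in your argument excludes a multiple of the \emph{other} modulus $m_b$ inside $(K_n,K_{n+1})$ whose $m_a$-residue drops below $\delta_n$; if such a point existed it, and not $P_q$, would be $K_{n+1}$, and both formulas would fail. Announcing that this step is the most delicate is a diagnosis, not a proof. The paper closes it with exactly the lockstep you anticipate: run the same progression argument in the $m_b$-class starting from $K_n$ (using the $m_1\leftrightarrow m_2$ symmetric form of Lemma~\ref{2-d lemma}, which you must also state and justify, since the lemma is written only for multiples of $m_1$). The admissible positive offsets turn out to be multiples of $K_{n+1}-\delta_{n+1}$, where $K_{n+1}$ and $\delta_{n+1}$ are the quantities your explicit construction has already produced, so the first cross-class candidate sits at $K_n+K_{n+1}-\delta_{n+1}>K_{n+1}$ (since $K_n>\delta_{n+1}$), i.e., outside the window. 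This is not circular, because $K_{n+1}$ and $\delta_{n+1}$ are defined by the explicit formulas before their extremality is invoked. Supplying this complementary-class bound, together with the swapped version of Lemma~\ref{2-d lemma}, is what is still missing from your proposal.
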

    \begin{proof}
        Denote $\rho(\mathbf{x, 0})$ as $\rho(X)$ for simplicity and we prove the theorem by induction on $X$, together with arguments below:
        \begin{enumerate}
            \item $m_1|(K_{2j}-\delta_{2j})$ and $m_2|(K_{2j+1}-\delta_{2j+1})$, $j \in \mathbb{Z^*}$
            \item $m_2|K_{2j}$ and $m_1|K_{2j+1}$, $j \in \mathbb{Z^*}$
        \end{enumerate}
        Noticing that $\rho(K_{2j})=|\lc K_{2j}\rc_{m_2}-\lc K_{2j}\rc_{m_1}|$ and $\rho(K_{2j+1})=|\lc K_{2j+1}\rc_{m_2}-\lc K_{2j+1}\rc_{m_1}|$, by argument 2), we have
        \begin{equation}
            \lc K_{2j}\rc_{m_1}=\delta_{2j}\phantom{y}and\phantom{y}\lc K_{2j+1}\rc_{m_2}=\delta_{2j+1}
        \end{equation}
        Assume that, when $n\le j-1$, the arguments above are true.

        \textbf{Case 1} $j=2t$. In this case, we have $m_2|K_j=K_{2t}$ and $m_1|K_{j-1}=K_{2t-1}$. By Theorem \ref{2->1}, we only have to check those $X$ such that $\rho(X) < \delta_{2t}$. Define $\tilde\rho(X) := \lc X\rc_{m_2}-\lc X\rc_{m_1}$. Noting that when $m_1|X$, $\tilde\rho(X)>0$ and $m_2|X$, $\tilde\rho(X)<0$, $\tilde\rho(K_{2t-1})=\delta_{2t-1}$. First, we determine $s$ that satisfies $\tilde\rho(K_{2t-1}+sm_1)\in(-\delta_{2t}, \delta_{2t-1})$.
        \begin{equation}
        \begin{aligned}
            \delta_{2t-1}>\tilde\rho(K_{2t-1}+sm_1)&=\lc K_{2t-1}+sm_1\rc_{m_2}\\
            &=\lc \delta_{2t-1} + \lc sm_1\rc_{m_2}\rc_{m_2}
        \end{aligned}
        \end{equation}
        Note that $\delta_{2t-1} \le \delta_{2t-1}+\lc sm_1\rc_{m_2}<2m_2$, hence
        \begin{equation}
            \tilde\rho(K_{2t-1}+sm_1) = \delta_{2t-1}+\lc sm_1\rc_{m_2}-m_2>0
        \end{equation}
        Assuming that $s$ is increasing, we consider two "adjacent" $s_g$ and $s_{g+1}$ such that,
        \begin{equation}
            m_2-\delta_{2t-1}<\lc s_tm_1\rc_{m_2}<m_2\phantom{y} t=g,g+1
        \end{equation}
        Consequently we have
        \begin{equation}
            \rho(\mathbf{s_gm_1, s_{g+1}m_1})=|\lc s_gm_1\rc_{m_2}-\lc s_{g+1}m_1\rc_{m_2}|<\delta_{j-1}
        \end{equation}
        By Lemma \ref{2-d lemma},
        \begin{equation}
            (s_{g+1}-s_{g})m_1\ge K_j-\delta_j
        \end{equation}
        On the other hand, when $t\le \lfloor \delta_{2t-1}/\delta_{2t}\rfloor$,
        \begin{equation}
            \tilde\rho(K_{2t-1}+t(K_{2t}-\delta_{2t}))=\lc\delta_{2t-1}-t\delta_l\rc_{m_2}=\delta_{2t-1}-t\delta_{2t}<\delta_{2t-1}
        \end{equation}
        This leads to $sm_1=t(K_{2t}-\delta_{2t})$ when $t\le \lfloor \delta_{2t-1}/\delta_{2t}\rfloor$.\\
        \indent Second, we consider $\tilde\rho(K_{2t}+sm_2)\in(-\delta_{2t},\delta_{2t-1})$.  If $t\le \lfloor \delta_{2t}/\delta_{2t+1}\rfloor$, then $sm_2=t(K_{2t+1}-\delta_{2t+1})$.
        Therefore, if $s>0$, then
        \begin{equation}
            K_{2t}+sm_2\ge K_{2t}+K_{2t+1}-\delta_{2t+1}>K_{2t+1}
        \end{equation}
        However, when $K_{2t} \le X \le K_{2t+1}$, equation (48) does not hold. Thus $\delta_{2t+1}$ is obtained exactly when $t=\lfloor \delta_{2t-1}/\delta_{2t} \rfloor$ and
        \begin{equation}
            K_{2t+1}=K_{2t-1}+(K_{2t}-\delta_{2t})\lfloor \delta_{2t-1}/\delta_{2t} \rfloor
        \end{equation}
        along with
                        \begin{equation}
            \delta_{2t+1}=\lc \delta_{2t-1}\rc_{2t}
        \end{equation}
        Based on the induction hypothesis, $m_2|K_{2t+1}$. Similarly,
        \begin{equation}
            K_{2t+1}-\delta_{2t+1} = (K_{2t-1}-\delta_{2t-1}) + K_l\lfloor \delta_{2t-1}/\delta_{2t} \rfloor
        \end{equation}
        Still by the induction hypothesis, $m_1|K_{2t+1}-\delta_{2t+1}$ is clear.\\
        \textbf{Case 2} When $j=2t-1$, consider those $\tilde\rho(X)\in(-\delta_{j-1}=-\delta_{2t-2}, \delta_j=\delta_{2t-1})$, a similar discussion can lead to the conclusion.
        Then the induction is completed.
    \end{proof}

    \begin{rmk}
        Corollary \ref{rmk x dim} shows that finding $K$ under given $\delta$  is equivalent to finding smallest non-negative integers $k_1, k_2, ..., k_L,$ such that $\forall l, j, |k_lm_l-k_jm_j|<4\delta$. Thus determining the closed-form expression of $K$s and $\delta$s in $L$-dimensional space is to solve a combination of $\frac{L(L-1)}{2}$ 2-D cases entangled together.
            \end{rmk}
\end{section}

\begin{section}{Reconstruction Scheme \& Modulus Selection}
 \noindent In this section, we develop a robust reconstruction scheme and provide a probability analysis of the modulus selection that can achieve the given error bound. Given the encoding system with the prescribed $\delta$ and dynamic range $[0,K]$, we consider recovering $E_\alpha$ by CRT from each possible error vector $\mathbf{e_\alpha}=(\alpha_1 ,\alpha_2...,\alpha_{\tau} )$, where $\alpha_j \in \{\lceil-\delta+1 \rceil,\lceil-\delta+2 \rceil,...,\lceil  \delta-2\rceil, \lceil \delta-1 \rceil \}$. Store $\{E_\alpha\}$ in a list $\mathcal{L}_{e}$ and the length of $\mathcal{L}_{e}$ is $\tau = (2\delta-1)^{L}$. \footnote {Another scheme is to store those starting points of the lines, which the residue representation of $E_\alpha+\delta$ lie on. An error list of length at most $L(2\delta-1)^{L-1}$ is necessary.} Based on $\mathcal{L}_{e}$, we propose a search-based reconstruction scheme as follows. \footnote{Lattice-based decoding schemes for remainder codes have been studied in \cite{2000chinese}, \cite{2004lattice}. Similar techniques can also be generalized in our case. However, those algorithms are more or less restricted due to the approximation factor of LLL Reduction. The reconstruction scheme we proposed is general, based on the algebra property of the remainder code.}
    \begin{algorithm}[h]
        \caption{}
        \textbf{Input} Given the erroneous residue vector $\mathbf{\tilde x}$\\
        Step-1. Compute $\tilde X$ by CRT from erroneous vector $\mathbf{\tilde x}$.\\
        Step-2. Implement a binary search for all $E_{\alpha_j}$ in $\mathcal{L}_{e}$, such that $0\le\tilde X-E_{\alpha_j} \le K$, $j=1,2,...,\vartheta$.\\
        Step-3. Calculate $\widehat X = \sum_{j=1}^{\vartheta}(\tilde X-E_{\alpha_j})/\vartheta$.\\
        \textbf{Output} Output $\widehat X$.
    \end{algorithm}
        \\ \indent The time complexity is $O(Llog(2\delta))$ with $O((2\delta)^L)$ space complexity. In the following, we give a proof to show the validity of the proposed algorithm and $|\widehat X-X|<\delta$, where $X$ is the original integer without errors.\\
    \indent    Assuming that $X_0$ is the largest integer in $\mathcal{M}(K)$ such that $X_0 \le X$, consider a sphere under the shift pseudo metric, of which the center is in the position of vector $\mathbf{\tilde x}$ with radius $2\delta$. Recalling equation (\ref{rho1}) and (\ref{rho2}), we know that there is only one line in the $L$-dimensional space that intersects with the sphere, which is exactly the line starting at $\mathbf{x_0}$ whose direction vector is $(1,1,...1)$. Denote the line as $l_{X_0}$. Based on Theorem \ref{errordist}, we have $\rho(\mathbf{\tilde x - e_{\alpha_j}}, \mathbf{\tilde x}) < 2\delta.$
        Hence if $0\le\tilde X - E_{\alpha_j} < K$, then $\mathbf{\tilde x - e_{\alpha_j}}$ is on the line $l_{X_0}$. Furthermore, the original $\mathbf{x}$ is on $l_{X_0}$ as well and $\rho(\mathbf{x,\tilde x})<2\delta$. Thus at least one $E_{\alpha_j}$ can be found in step 2. Then by step 3, $\widehat X$ is exactly the projection of $\tilde X$ onto $l_{X_0}$. Further calculation leads to $|\widehat X-X|<\delta$.

        To end this section, we present a lower bound of probability for a random modulus selection, which can achieve the given error bound $\delta$ and dynamic range $K$ by randomly selecting $L$ primes within interval $[2^{\beta-1}, 2^{\beta}]$ as moduli. We will prove that with suitable $L$ and ${\beta}$, the successful probability is extremely large. Besides, we will demonstrate how to select such kind of $L$ and ${\beta}$. The following results are derived from Corollary 3.
          \begin{thm}
         When all moduli are primes, $p_1, p_2,...,p_L$, the dynamic range $K$ is $x-1$ where $x$ is the smallest positive integer such that $p_1p_2...p_l | (x-4\delta+1)(x-4\delta+2)\cdots(x-1)x,x\ge 4\delta$, with the error bound $\delta$ given.
     \end{thm}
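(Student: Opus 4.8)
The plan is to reduce the statement directly to the window characterization of $K$ already established in Corollary \ref{rmk x dim}, and then to exploit the primality of the moduli to rewrite the ``contains a multiple'' condition as a single divisibility condition. I set $x = K+1$ and recall from Corollary \ref{rmk x dim} that $K$ is the smallest integer for which the window $W(x) = \{x-4\delta+1,\, x-4\delta+2,\, \dots,\, x-1,\, x\}$ of $4\delta$ consecutive integers contains at least one multiple of each modulus $p_l$. Equivalently, $x$ is the smallest positive integer with this property. The side constraint $x \ge 4\delta$ is imposed precisely so that $W(x)$ consists of positive integers; without it the degenerate choice placing $0 \in W(x)$ would make every divisibility condition hold vacuously and destroy the minimality.

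The key step is the use of primality through Euclid's lemma. For a prime $p_l$, one has $p_l \mid (x-4\delta+1)(x-4\delta+2)\cdots x$ if and only if $p_l$ divides one of the factors, i.e. if and only if $W(x)$ contains a multiple of $p_l$. Since $p_1, p_2, \dots, p_L$ are distinct primes, their product $p_1 p_2 \cdots p_L$ is squarefree, so it divides the window product if and only if each $p_l$ individually divides it. Chaining these two equivalences, the condition ``$W(x)$ contains a multiple of every $p_l$'' is exactly the condition $p_1 p_2 \cdots p_L \mid (x-4\delta+1)(x-4\delta+2)\cdots x$. Taking the smallest $x \ge 4\delta$ satisfying this and reading off $K = x-1$ yields the claim.

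The one point that deserves care is the bookkeeping around the boundary: I must check that the ``smallest $x$'' is genuinely preserved under each of the two equivalences (both are pointwise biconditionals in $x$, so minimality transfers verbatim) and that the hypothesis $x \ge 4\delta$ correctly excludes the spurious windows in which a zero or negative entry would force divisibility by everything. I expect this to be the main, though mild, obstacle, together with confirming that primality is truly what is being used and cannot be relaxed. Indeed, for a composite modulus the equivalence between ``$W(x)$ contains a multiple'' and ``the modulus divides the window product'' fails — for instance a modulus $6$ can divide the product of a window containing $2$ and $3$ without the window containing any multiple of $6$ — so the reduction to a single product divisibility is special to the prime case and is exactly where the hypothesis enters.
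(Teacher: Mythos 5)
Your proposal is correct and follows exactly the route the paper intends: the paper states this theorem as a direct consequence of Corollary \ref{probab cor1} (via the window characterization in Corollary \ref{rmk x dim}) and never writes out the details, which are precisely your two pointwise equivalences --- Euclid's lemma for each prime factor and squarefreeness of $p_1p_2\cdots p_L$ --- together with the observation that $x\ge 4\delta$ keeps $0$ out of the window. Your elaboration, including the remark that the reduction to a single product divisibility genuinely requires prime moduli, fills in what the paper leaves implicit and is sound.
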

     Without loss of generality, assume $p_1<p_2<...<p_L$.
     \begin{thm}
         With given $K$ and $\delta$, the probability of success of random modulus selection is no less than
         \begin{equation}
             1- \lfloor \frac{K}{p_L}\rfloor(\frac{4\delta log_2K}{2^{{\beta}-1}})^L
         \end{equation}
     \end{thm}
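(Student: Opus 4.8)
The plan is to reduce the success of the random selection to a single number‑theoretic collision event and then to bound its probability by a union bound whose terms are controlled through prime‑divisor counting. By the theorem immediately preceding this one, once $p_1,\dots,p_L$ are fixed the achieved dynamic range equals $x-1$, where $x$ is the smallest integer with $x\ge 4\delta$ and $p_1p_2\cdots p_L \mid P(x)$, writing $P(x):=\prod_{i=0}^{4\delta-1}(x-i)$. Hence the selection \emph{fails} to reach the prescribed range $K$ exactly when there is an integer $x$ with $4\delta\le x\le K$ satisfying $p_l\mid P(x)$ for every $l$; equivalently, by Corollary \ref{probab cor1}, some window of $4\delta$ consecutive integers lying in $[0,K]$ contains a multiple of every chosen prime. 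So the first step is to write the failure event as $\bigcup_{x}\{\forall l:\ p_l\mid P(x)\}$ and prepare a union bound.

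Second, I would bound a single term $\Pr[\forall l:\ p_l\mid P(x)]$. Because the primes are drawn independently and uniformly from the primes in $[2^{\beta-1},2^{\beta}]$, this factors as $\prod_{l}\Pr[p_l\mid P(x)]$. For a fixed $x$, the event $p\mid P(x)$ means that $p$ divides one of the $4\delta$ integers $x,x-1,\dots,x-4\delta+1$, each at most $K$; an integer $\le K$ has at most $\log_2 K/(\beta-1)$ prime divisors exceeding $2^{\beta-1}$, so $P(x)$ has at most $4\delta\log_2 K/(\beta-1)$ prime divisors inside the sampling interval. Dividing by a Chebyshev/PNT lower bound on the number of primes in $[2^{\beta-1},2^{\beta}]$, which exceeds $2^{\beta-1}/(\beta-1)$ for the relevant range of $\beta$, the factors of $(\beta-1)$ cancel and yield $\Pr[p_l\mid P(x)]\le 4\delta\log_2 K/2^{\beta-1}$. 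Independence then raises this to the $L$‑th power.

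Third, I would cut the number of union‑bound terms from $K$ down to $\lfloor K/p_L\rfloor$ by anchoring on the largest prime. Any failing window must contain a multiple of $p_L$, and there are only $\lfloor K/p_L\rfloor$ multiples of $p_L$ in $[0,K]$, so grouping the admissible $x$ according to the multiple of $p_L$ they straddle collapses the outer sum to $\lfloor K/p_L\rfloor$ representative terms. Combining this count with the per‑term estimate of Step~2 gives $\Pr[\text{fail}]\le \lfloor K/p_L\rfloor\,(4\delta\log_2 K/2^{\beta-1})^{L}$, and the stated lower bound on the success probability follows by complementation.

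The hard part will be the bookkeeping that simultaneously keeps the exponent at $L$ and the window width at $4\delta$ while reducing the outer count to $\lfloor K/p_L\rfloor$: a naive grouping either effectively doubles the window (the windows straddling a fixed multiple of $p_L$ sweep an interval of length about $8\delta$) or consumes one factor of the collision probability (the $p_L$‑condition being spent on the anchoring), which would degrade the clean form of the bound. Reconciling the anchoring with the independence exploited in Step~2 is the delicate point, together with securing the uniform validity of the prime‑counting lower bound that legitimizes the tidy $2^{\beta-1}$ denominator; for small $\beta$ this last estimate must be stated with care.
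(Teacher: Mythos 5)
Your proposal is essentially the paper's own argument recast in probabilistic language: the paper computes a ratio of counts (the number of ``bad'' $L$-subsets of primes in $[2^{\beta-1},2^{\beta}]$ over the total number of $L$-subsets), while you take a union bound over windows and exploit independence of the draws, but the ingredients coincide exactly --- the reduction via Theorem 5 to the existence of a window of $4\delta$ consecutive integers divisible by every chosen prime, the estimate of at most $4\delta\log_2(kp_L)/(\beta-1)$ prime factors of such a window inside the sampling interval, a lower bound of order $2^{\beta-1}/\beta$ on the number of primes there, and the anchoring of the windows on the $\lfloor K/p_L\rfloor$ multiples of $p_L$. These are precisely the paper's displays (\ref{up method}) and (\ref{down method}).

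The one substantive point is that the ``hard part'' you flag at the end is genuinely unresolved in the paper as well. Since each multiple $kp_L$ lies in $4\delta$ distinct windows, the count in (\ref{up method}) carries an explicit extra factor of $4\delta$, and the quotient of (\ref{up method}) by (\ref{down method}) carries a further factor $(\beta/(\beta-1))^{L}$ from the mismatched denominators ($\beta-1$ in the divisor count versus $\beta$ in the prime count); both are silently dropped in passing to the stated constant. What the computation actually yields is of the form
\begin{equation}
4\delta\,\Bigl\lfloor \tfrac{K}{p_L}\Bigr\rfloor\left(\frac{4\delta\,\beta\,\log_2 K}{(\beta-1)\,2^{\beta-1}}\right)^{L},
\end{equation}
which is weaker than the theorem as written. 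So your reconstruction is faithful to the paper's proof, including its loose step, and your diagnosis is accurate: the alternatives you list (sweeping an $8\delta$-wide interval, or spending the $p_L$-divisibility on the anchoring and losing one power) are indeed the only clean ways to reduce the outer count to $\lfloor K/p_L\rfloor$, and each degrades the bound exactly as you describe. If you want a bound you can fully justify, either keep the extra $4\delta$, or forgo the anchoring and accept $K$ in place of $\lfloor K/p_L\rfloor$.
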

     \begin{rmk}
         If $4\delta log_2K<2^{{\beta}-1}$, the probability of failure is with exponential decay in $L$. Besides, it also decreases sharply as ${\beta}$ increases. Hence $L$ and ${\beta}$, which are not large, can be chosen to ensure that the success probability above is acceptable. Once chosen, verifying these moduli is trivial in complexity $O((4\delta)^{L-1})$.
     \end{rmk}
        A tighter bound can be further derived as follows.
     \begin{thm}
         The probability of success is no less than
         \begin{equation}
             1-\frac{(4\delta)^{L+1}(\mathbf{\Gamma}(L+1, a)-\mathbf{\Gamma}(L+1, b)}{p_Llog^L(2)2^{L({\beta}-1)}(-1)^L}
         \end{equation}
         where $a = -log((\lfloor K/p_L\rfloor+1) p_L)$, $b = -log(p_L)$ and $\mathbf{\Gamma}(x,y)$ is the Gamma function.
     \end{thm}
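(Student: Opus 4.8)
The plan is to bound the failure probability $P_f$ of the random selection and show it does not exceed the quantity subtracted in the claimed expression. By Corollary \ref{rmk x dim} together with the characterization of $K$ for prime moduli established just above, the selection $\{p_1,\dots,p_L\}$ fails to reach the target dynamic range $K$ exactly when some window of $4\delta$ consecutive integers contained in $[1,K]$ is simultaneously \emph{hit} by every prime, i.e. when there is an $x\le K$ for which each $p_l$ divides one of $x-4\delta+1,\dots,x$. So the first step is to identify ``achieved range $<K$'' with the existence of such an all-primes-hit window and to control its probability by a union bound over windows.

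The sharpening over the preceding theorem is to retain the window-dependent hitting probability instead of replacing it by its worst case $\log_2 K$. For a window whose integers have size about $n$, a prime in $[2^{\beta-1},2^{\beta}]$ can hit it only by being one of the prime divisors (necessarily exceeding $2^{\beta-1}$) of the $4\delta$ integers in the window; since each such integer has at most $\log_2 n$ prime factors and the interval contains on the order of $2^{\beta-1}$ primes, a single randomly chosen prime hits the window with probability at most $4\delta\log_2 n/2^{\beta-1}$, which is the same per-prime factor used in the preceding theorem but evaluated at $n$ rather than at $K$. Because the largest prime $p_L$ must itself hit, every failure window contains a multiple of $p_L$; grouping the candidate windows by that multiple $jp_L$ (with $j=1,\dots,\lfloor K/p_L\rfloor$, each multiple lying in at most $4\delta$ windows) and using the mutual independence of the selected primes gives
\begin{equation}
P_f \le \sum_{j=1}^{\lfloor K/p_L\rfloor} 4\delta\left(\frac{4\delta\log_2(jp_L)}{2^{\beta-1}}\right)^{L} = \frac{(4\delta)^{L+1}}{2^{L(\beta-1)}\log^{L} 2}\sum_{j=1}^{\lfloor K/p_L\rfloor}\big(\log(jp_L)\big)^{L}.
\end{equation}

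Next I would pass from the sum to an integral. As $t\mapsto(\log(tp_L))^{L}$ is increasing, the sum is at most $\int_{1}^{\lfloor K/p_L\rfloor+1}(\log(tp_L))^{L}\,dt$; the substitution $s=tp_L$ converts this into $\tfrac{1}{p_L}\int_{p_L}^{(\lfloor K/p_L\rfloor+1)p_L}(\log s)^{L}\,ds$, and the further substitution $s=e^{-v}$ (so $\log s=-v$, $ds=-e^{-v}\,dv$) sends the limits to $v=b=-\log p_L$ and $v=a=-\log((\lfloor K/p_L\rfloor+1)p_L)$, with $a<b$, producing
\begin{equation}
\sum_{j=1}^{\lfloor K/p_L\rfloor}\big(\log(jp_L)\big)^{L}\le \frac{(-1)^{L}}{p_L}\int_{a}^{b} v^{L}e^{-v}\,dv=\frac{(-1)^{L}}{p_L}\big(\Gamma(L+1,a)-\Gamma(L+1,b)\big),
\end{equation}
using $\Gamma(s,x)=\int_{x}^{\infty}t^{s-1}e^{-t}\,dt$. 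Substituting this back into the previous display and absorbing $(-1)^{L}=1/(-1)^{L}$ yields precisely the failure bound subtracted in the statement, so $1-P_f$ is at least the claimed quantity.

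I expect the main obstacle to be the middle step rather than the final integral. The delicate point is making the window-dependent per-prime hitting bound rigorous — controlling the number of prime divisors above $2^{\beta-1}$ of the $4\delta$ integers and the count of primes in $[2^{\beta-1},2^{\beta}]$ so that the constants collapse to $4\delta\log_2 n/2^{\beta-1}$ — and cleanly reducing the candidate windows to those anchored at multiples of $p_L$ without losing or double counting the factor $4\delta$ that produces the exponent $L+1$. Once the orientation $a<b$ and the sign $(-1)^{L}$ arising from $s=e^{-v}$ are tracked, the incomplete-Gamma evaluation is routine.
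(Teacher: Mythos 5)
Your proposal is correct and follows essentially the same route as the paper: both arguments reduce failure to the existence of a length-$4\delta$ window hit by all $L$ primes, anchor the candidate windows at the multiples $jp_L$ (each appearing in at most $4\delta$ windows), bound the per-window count of admissible primes by $4\delta\log_2(jp_L)$ against the roughly $2^{\beta-1}$ primes available, and then estimate the resulting sum of $\log^L$ terms by an integral that evaluates to the incomplete Gamma expression. The only cosmetic difference is that you phrase the counting as a union bound with independent draws while the paper takes a ratio of binomial coefficients of bad to total selections; your explicit substitution $s=e^{-v}$ fills in the integral step the paper only sketches.
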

     \begin{proof}
         We prove Theorem 6 \& 7 together. From Theorem 5, it is a necessary condition that $p_1p_2...p_L \nmid (X-4\delta+1)(X-4\delta+2)\cdots(X-1)X$ when $X\le K$. Denote the interval $[a-4\delta+1, a]$ as $I(a,\delta)$. Consider those intervals which contain a multiple of $p_L$ and there exists $4\delta$ such intervals which include $kp_L$, where $k$ is an integer smaller than $\frac{K}{p_L}$ and the integers in such intervals are approximately $log_2(kp_L)$-bit long. Therefore, each integer can have at most $log_2(kp_L)/(\beta-1)$ prime factors in $[2^{{\beta}-1},2^{\beta}]$. Then the whole interval contains approximately $4\delta log_2(kp_L)/({\beta}-1)$ prime factors at most. Thus the number of selection methods is approximated as
     \begin{equation}\label{up method}
         \sum_{k=1}^{\lfloor K/p_L\rfloor} 4\delta \binom{\frac{4\delta log_2(kp_L)}{{\beta}-1}}{L} < \frac{4\delta}{L!} \sum_{k=1}^{\lfloor K/p_L\rfloor} (\frac{4\delta log_2(kp_L)}{{\beta}-1})^L
     \end{equation}
     On the other hand, there are at least $2^{{\beta}-1}/{\beta}$ primes in the interval $[2^{{\beta}-1},2^{\beta}]$. Therefore the number of selection is approximately as \footnote{ Note that $2^{\beta}$ is far larger than $L$ in real implementation.}

     \begin{equation}
          \label{down method}
         \binom{2^{{\beta}-1}/{\beta}}{L} \approx \frac{1}{L!} (\frac{2^{{\beta}-1}}{{\beta}})^L
     \end{equation}
     The theorem is obtained by using indefinite integral of $log^L_2(x)$ to estimate (\ref{up method}), and dividing it by (\ref{down method}).
 \end{proof}
     \begin{cor}
         By our previous analysis, if these moduli are multiplied by a common factor $\Gamma$, then $\forall j$, both $K_j$ and $\delta_j$ are multiplied by $\Gamma$ as well. In such case, the problem is simplified to find a set of moduli to achieve an expected dynamic range $[0,\frac{K}{\Gamma}]$ and error bound $\frac{\delta}{\Gamma}$.

     \end{cor}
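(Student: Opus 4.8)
The plan is to transport a single scaling identity through the reductions of Section~III. The identity, already noted just after Lemma~\ref{better formula lm}, is $\lc \Gamma X\rc_{\Gamma m} = \Gamma\lc X\rc_m$ for every integer $X$ and modulus $m$. Write $\mathscr{M}' = \{\Gamma m_1,\dots,\Gamma m_L\}$ for the scaled moduli, and decorate with a prime the pseudo metric $\rho'$, the multiple set $\mathcal{M}'(\cdot)$, and the error bound $\delta'(\cdot)$ taken with respect to $\mathscr{M}'$. First I would record two immediate consequences. Since $\Gamma m_l \mid Y$ with $0\le Y\le K'$ holds exactly when $Y = \Gamma Z$ with $m_l\mid Z$ and $0\le Z\le \lfloor K'/\Gamma\rfloor$, one has $\mathcal{M}'(K') = \Gamma\,\mathcal{M}(\lfloor K'/\Gamma\rfloor)$. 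And by Lemma~\ref{better formula lm}, for any $Z\in\mathcal{M}(K)$ the $\mathscr{M}'$-distance of $\Gamma Z$ from $0$ is $\max_l \lc \Gamma Z\rc_{\Gamma m_l} = \Gamma\max_l\lc Z\rc_{m_l}$, i.e.\ $\Gamma$ times the $\mathscr{M}$-distance of $Z$ from $0$.

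Second, I would feed these into the reduced closed form of Theorem~\ref{1-d form theorem}. Substituting $Y=\Gamma Z$ into
\[
    \delta'(K') = \min_{Y>0,\,Y\in\mathcal{M}'(K')}\max_l\frac{\lc Y\rc_{\Gamma m_l}}{4}
\]
and applying the two consequences collapses the minimization over $\mathcal{M}'(K')$ onto $\mathcal{M}(\lfloor K'/\Gamma\rfloor)$ and pulls out the factor $\Gamma$, giving the master relation $\delta'(K') = \Gamma\,\delta(\lfloor K'/\Gamma\rfloor)$ for every $K'$; in particular $\delta'(\Gamma K)=\Gamma\,\delta(K)$. This exhibits $\delta'$ as a step function whose value changes only at multiples of $\Gamma$, and whose list of distinct values is $\Gamma$ times that of $\delta$. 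The sequence statement then follows by reading off where each value first appears: the distinct values give $\delta_j' = \Gamma\delta_j$, and the smallest $K'$ with $\delta'(K')=\Gamma\delta_j$ is the smallest $K'$ for which $\lfloor K'/\Gamma\rfloor$ reaches the first argument at which $\delta$ attains $\delta_j$, namely $K'=\Gamma K_j$; hence $K_j'=\Gamma K_j$.

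Finally the equivalence claim is immediate from the master relation: to realize a target pair $(K,\delta)$ with moduli carrying the common factor $\Gamma$, write them as $\Gamma m_l$; then $\delta'(K)\ge\delta$ is the same inequality as $\delta(\lfloor K/\Gamma\rfloor)\ge\delta/\Gamma$, so the search for the cofactors $\{m_l\}$ is exactly the problem of realizing dynamic range $[0,K/\Gamma]$ with error bound $\delta/\Gamma$. The only genuinely delicate point I anticipate is the bookkeeping forced by the floor $\lfloor K'/\Gamma\rfloor$ together with the precise convention for $K_n$ in Remark~\ref{rmk-expression}; checking that these combine to give exactly $K_j'=\Gamma K_j$ is the one step needing care, whereas everything else is a direct transport of the identity $\lc\Gamma X\rc_{\Gamma m}=\Gamma\lc X\rc_m$.
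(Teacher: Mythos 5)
Your proposal is correct and follows essentially the same route as the paper, which offers no formal proof beyond invoking the scaling identity $\lc k\Gamma m_t\rc_{\Gamma m_l} = \Gamma\lc km_t\rc_{m_l}$ recorded after Lemma~\ref{better formula lm} and the reduction of Theorem~\ref{1-d form theorem}. Your write-up is a careful elaboration of exactly that argument, with the added (and worthwhile) bookkeeping of $\mathcal{M}'(K')=\Gamma\,\mathcal{M}(\lfloor K'/\Gamma\rfloor)$ and the resulting relation $\delta'(K')=\Gamma\,\delta(\lfloor K'/\Gamma\rfloor)$.
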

\end{section}

\section{Robust CRT for Multiple Integers}

\noindent In this section, we will present our main contribution, the robust CRT for multiple integers (Generalized Robust CRT, GRCRT) . As mentioned before, developing such GRCRT for multiple integer is to overcome the hardness from lacking correspondence and the errors caused by noise introduced. Therefore, we will propose the complete solution for the problem in two steps to achieve robustness and figure out correspondence, respectively.

\subsection{Reduction to GCRT}
Assume that $m_l=\Gamma M_l$ for $l=1,2,...,L$, where $\{M_l\}$ are co-prime to each other. Without loss of generality, let $\{m_l\}$ be arranged in an ascending order, so dose $\{M_l\}$. We define $r^{c}_i=\langle X_i \rangle _{\Gamma}$ as the common residue of $X_i$ and $\{r^{c}_i,i=1,2,...,N\}$ are arranged in an ascending order. Let $r_{il}=\langle X_i \rangle _{m_l}$ and the erroneous residue $\widetilde{r}_{il}=\langle r_{il}+\Delta r_{il} \rangle_{m_l}$, since the noises bring errors into the residues, where $\Delta r_{il}$ is the error of $r_{il}$. When $\widetilde{r}_{il}=\widetilde{r}_{ik}$ for $l \neq k$, they are called repeated residues. Let $\widetilde{r}^{c}_{il}=\langle \widetilde{r}_{il} \rangle_{\Gamma} $ and assume  $\Lambda=\{ \widetilde{r}^{c}_{il}, i=1,2,...,N, l=1,2,...,L \}$ to be arranged in an ascending order denoted by $\{\gamma_1,\gamma_2,...,\gamma_{\kappa}\}$, where $\kappa \leq NL$ \footnote{In \cite{2017notes}, it is proved that the probability of existing repeated residues  is greatly small. However, for the completeness of analysis, we still take such case into consideration and thus assume $\kappa \leq NL$. }, with equality if no repeated residues exist in $\Lambda$. We define
 $$d_\Gamma(X,Y)=\min_{k \in \mathbb{Z}} \{|X-Y+k\Gamma|\}$$
 for any integers $X$ and $Y$. As we already assume that $r^{c}_i$ are in an ascending order, let
 $$D_i=D(r^{c}_{i}, r^{c}_{i+1})=r^{c}_{i+1}-r^{c}_{i}$$
  when $i \in  \{1,2,...,N-1\}$, and
$$D_N=D(r^{c}_{N}, r^{c}_{1})=r^{c}_{1}-r^{c}_{N}+\Gamma$$
when $i=N$.

When the error bound of $\{\Delta r_{il}\}$ satisfies that $\delta< \frac{\Gamma}{4N}$, we claim that there is a solution to determine $\{X_{i}\}$ such that $|X_i-\widetilde{X}_{i}| < \delta$, where $\{\widetilde{X}_{i}\}$ are estimation of $X_i$ correspondingly. First we propose the following lemma.

\begin{lm} \label{circle} There exists $\xi \in \{1,2,...,\kappa \} $ such that
\begin{equation}
\label{feiyang}
\gamma_{\langle \xi+1 \rangle_{\kappa}}-\gamma_{\xi}+\Gamma \mathbf{1}( \xi=\kappa ) > 2\delta
\end{equation}
where $\mathbf{1}( \xi=\kappa )$ is an indicator, which turns to be 1 iff $\xi=\kappa$. Otherwise $\mathbf{1}( \xi=\kappa )=0$, when $\xi \not =\kappa$.

\end{lm}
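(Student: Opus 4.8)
The plan is to reinterpret the statement geometrically on a circle of circumference $\Gamma$ and then finish with a pigeonhole argument on the common residues $r^c_i$. First I would observe that $\gamma_{\langle \xi+1\rangle_\kappa} - \gamma_\xi + \Gamma\mathbf{1}(\xi=\kappa)$ is exactly the cyclic gap from $\gamma_\xi$ to its successor in $\Lambda$ on $\mathbb{Z}/\Gamma\mathbb{Z}$, and that these $\kappa$ gaps telescope to a total of $\Gamma$. So the lemma amounts to exhibiting a single cyclic gap strictly larger than $2\delta$.

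The key structural fact I would establish is that every element of $\Lambda$ lies within cyclic distance $\delta$ of one of the $N$ common residues $r^c_i$. Indeed, since $\Gamma \mid m_l$, reducing $\tilde r_{il} = \langle r_{il} + \Delta r_{il}\rangle_{m_l}$ modulo $\Gamma$ gives $\tilde r^c_{il} = \langle r^c_i + \Delta r_{il}\rangle_\Gamma$, whence $d_\Gamma(\tilde r^c_{il}, r^c_i) = |\Delta r_{il}| < \delta$ (using $\delta < \Gamma/2$). Consequently every $\gamma \in \Lambda$ falls inside the union of the $N$ open arcs $(r^c_i - \delta,\, r^c_i + \delta)$, so $\Lambda$ decomposes into at most $N$ clusters, one about each common residue.

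Next I would use the gaps $D_i$ between consecutive common residues, which by their definition sum to $\Gamma$ around the circle; by averaging there is some $D_{i^*} \ge \Gamma/N$. Inside this largest gap, the sub-arc $(r^c_{i^*} + \delta,\, r^c_{i^*+1} - \delta)$ is at cyclic distance greater than $\delta$ from \emph{every} $r^c_j$, so by the clustering fact it contains no element of $\Lambda$. Its length is $D_{i^*} - 2\delta \ge \Gamma/N - 2\delta$, and the hypothesis $\delta < \Gamma/(4N)$ gives $\Gamma/N > 4\delta$, hence $D_{i^*} - 2\delta > 2\delta$. The two elements of $\Lambda$ bracketing this empty, nonempty sub-arc are consecutive in the cyclic order, so the cyclic gap between them is at least $D_{i^*} - 2\delta > 2\delta$; choosing $\xi$ to be the index of the lower bracket yields the desired inequality.

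The main obstacle to handle carefully is the cyclic bookkeeping: I must confirm that the empty sub-arc is free of \emph{all} $\gamma$'s (no cluster about a distant $r^c_j$ can intrude, which is precisely where the distance-greater-than-$\delta$ estimate is needed), and that its two brackets form a genuine consecutive pair $(\gamma_\xi, \gamma_{\langle \xi+1\rangle_\kappa})$ — including the case where the empty arc straddles the seam $\gamma_\kappa \to \gamma_1$, activating the $\Gamma\mathbf{1}(\xi=\kappa)$ term. A secondary point, easily dispatched, is that coincidences among the $r^c_i$ only merge clusters and enlarge the maximal gap, so the argument is unaffected; in particular the weak count $\kappa \le NL$ plays no role, since what matters is the number $N$ of common-residue clusters rather than the number of raw residues.
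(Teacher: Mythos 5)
Your proposal is correct and follows essentially the same route as the paper: both arguments rest on the observation that every $\gamma\in\Lambda$ lies within cyclic distance $\delta$ of some common residue $r^c_i$, combined with the fact that the gaps $D_i$ sum to $\Gamma$ while $\delta<\Gamma/(4N)$ forces some $D_{i^*}>4\delta$, so the consecutive pair of $\Lambda$ bracketing that large gap is separated by more than $2\delta$. The only difference is presentational — you locate the large gap directly by averaging, whereas the paper argues by contradiction that not all $D_i$ can be below $4\delta$ — and your explicit treatment of the seam case and of coincident $r^c_i$ is a welcome tightening of the same argument.
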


\begin{proof} Firstly, assume that there exists some $i_0$ such that $D( r^{c}_{i_0}, r^{c}_{\langle i_0+1 \rangle_{N}})>4\delta$. We can find a $\xi \in \{1,2,...,\kappa \}$ such that $d_\Gamma (\gamma_{\xi},r^{c}_{i_0}) < \delta$, $d_\Gamma(\gamma_{\langle {\xi}+1 \rangle_{\kappa}}, r^{c}_{\langle i_0+1 \rangle_{N}} ) < \delta$ and $d_\Gamma (\gamma_{\xi},\gamma_{\langle {\xi}+1 \rangle_{\kappa}}) > 2\delta $, since $D( r^{c}_{i_0}, r^{c}_{\langle i_0+1 \rangle_{N}})>4\delta$. Particularly, when $\xi=\kappa$, it gives $\gamma_{1}-\gamma_{\kappa}+\Gamma> 2\delta$. Otherwise, if no such $\xi$ exists satisfying equation (\ref{feiyang}), then $D ( r^{c}_i, r^{c}_{\langle i+1 \rangle_{N}}) < 4\delta$ for $i\in \{1,2,...,N\}$. However, $\Gamma = \sum_{i=1}^{N} D (r^{c}_i, r^{c}_{\langle i+1 \rangle_{N}}) < 4N\delta < \Gamma$, which leads to a contradiction.
\end{proof}

\begin{rmk} $\frac{\Gamma}{4N}$ is a tight bound of $\delta$. That is to say that it is true by replacing all $\delta$ with $\frac{\Gamma}{4N}$ in (\ref{feiyang}), i.e., there existing some $i \in \{1,2,...,N\}$ such that $D_{i} > 2\frac{\Gamma}{4N}$. Therefore, in the following, we make a slight change of the definition of $\delta$, i,e,, assuming $$\max_{il} |\Delta_{il}| \leq \delta < \frac{\Gamma}{4N}$$ to have a tighter estimation.
\end{rmk}

\indent Although a modulo operation is with periodic properties, the residue of an integer modulo $\Gamma$ is within the range $[0,\Gamma)$. Since the errors are introduced in residues, it is possible to lead to $r^{c}_{i}+\Delta r_{il}< 0$ or $r^{c}_{i}+\Delta r_{il} \geq \Gamma$. In order to make the analysis more conveniently, we first discuss the following two cases.\\
\indent \textbf{Case I}. $\xi=\kappa
$. In this case, $\widetilde{r}^{c}_{il} \in [0,\Gamma-2\delta)$ or $\widetilde{r}^{c}_{il} \in [2\delta, \Gamma)$ for any $i \in \{1,2,...,N\}$. Therefore, define
\begin{equation}
                              \hat{r}^{c}_{il}=\widetilde{r}^{c}_{il}
\end{equation}
\indent \textbf{Case II}. $\xi \not= \kappa
$. In this case, if $\widetilde{r}^{c}_{il} \leq \gamma_{\xi}$ for any $i \in \{1,2,...,N\}$, define
\begin{equation}
\hat{r}^{c}_{il}=\widetilde{r}^{c}_{il}
\end{equation}
otherwise define
\begin{equation}
\hat{r}^{c}_{il}=\widetilde{r}^{c}_{il}-\Gamma
\end{equation}
\indent Inspired by
$$\langle q_i \rangle_{M_l}= \langle \lfloor \frac{X_i}{\Gamma} \rfloor \rangle_{M_l}=  \langle \frac{X_i-r^{c}_i}{\Gamma} \rangle_{M_l}=\langle \frac{r_{il}-r^{c}_i}{\Gamma} \rangle_{M_l},$$ we define
\begin{equation}
\widetilde{q}_{il} = \langle \lfloor \frac{ \widetilde{r}_{il}-\hat{r}^{c}_{il}}{\Gamma} \rfloor \rangle_{M_l}= \langle \frac{ \widetilde{r}_{il}-\hat{r}^{c}_{il}}{\Gamma} \rangle_{M_l}
\end{equation}

\indent With the methods developed in \cite{2016symmetric}, 
here we just assume that $M=\Gamma \prod_{l=1}^{L}M_l$ is big enough, where the explicit lower bound will be given later. Additionally, the cases of $X_i<\delta$ and all $X_i+\Delta r_{il}<0$ are not considered, which will be analyzed in Remark \ref{margin}. Let $q_i=\lfloor \frac{X_i}{\Gamma} \rfloor$ and $\widetilde{q}_i$ is the integer recovered with residues, $\{\widetilde{q}_{il}, l=1,2,...,L\}$, via CRT. We will prove later that $0 \leq |q_i-\widetilde{q_i}| \leq 1$ is held. In this case, the problem of determining multiple integers with RCRT is converted to determine $\widetilde{q}_{i}$ with a generalized error-free CRT (GCRT). One thing worth mentioning is that $\widetilde{r}_{il}$, $i=1,2,...,n$, being distinct for each $l$ is \textbf{not} required. The latest report on repeated residues is given in \cite{2007sharpened}. A GCRT with polynomial running time is proposed for distinct residues in \cite{2016symmetric}. More details about GCRT will be presented in the subsection B. Then we estimate $X_i$ with

\begin{equation}
\widetilde{X}_{i}=[\widetilde{q}_i\Gamma+\frac{\sum_{l=1}^{L} \hat{r}^{c}_{il}}{L}]
\end{equation}
where $[x]= \lfloor x+0.5 \rfloor $. \\
\indent Now we develop the following theorem to prove the robustness of the algorithm.

\begin{thm} \label{GRCRT}
$0 \leq |\widetilde{q}_i-q_i| \leq 1$ and $\alpha_{il}=|(\widetilde{q}_i-q_i)\Gamma+\hat{r}^{c}_{il}-r^{c}_i|=|\Delta r_{il}|$ are satisfied.
\end{thm}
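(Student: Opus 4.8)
The plan is to reduce both assertions to a single structural fact: for every fixed $i$ there is one integer $s_i$ with $|s_i|\le 1$ such that $\hat r^{c}_{il}=r^{c}_i+\Delta r_{il}-s_i\Gamma$ holds \emph{simultaneously for all} $l$. Granting this, both claims are immediate. Substituting the identity into the definition of $\widetilde q_{il}$ and reducing modulo $M_l$ recovers $q_i+s_i$ in every coordinate, so that CRT over the pairwise coprime $\{M_l\}$ returns $\widetilde q_i=q_i+s_i$; since $X_i<M=\Gamma\prod_l M_l$ the representative $q_i+s_i$ stays inside the dynamic range (the boundary exceptions being exactly those deferred to Remark \ref{margin}), whence $|\widetilde q_i-q_i|=|s_i|\le 1$. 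Moreover $(\widetilde q_i-q_i)\Gamma+\hat r^{c}_{il}-r^{c}_i=s_i\Gamma+(\Delta r_{il}-s_i\Gamma)=\Delta r_{il}$, which gives $\alpha_{il}=|\Delta r_{il}|$.

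First I would carry out the algebra that produces $s_{il}$ coordinate by coordinate. Because $\Gamma\mid m_l$, reducing $\widetilde r_{il}$ modulo $\Gamma$ annihilates the $m_l$-level wraparound, so $\widetilde r^{c}_{il}=\langle r^{c}_i+\Delta r_{il}\rangle_{\Gamma}$; write this as $r^{c}_i+\Delta r_{il}-\zeta_{il}\Gamma$ with $\zeta_{il}\in\{-1,0,1\}$. Since $\hat r^{c}_{il}$ differs from $\widetilde r^{c}_{il}$ by $0$ or $\Gamma$ according to Case I/II, we obtain $\hat r^{c}_{il}=r^{c}_i+\Delta r_{il}-s_{il}\Gamma$ for an integer $s_{il}$. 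Using that $r_{il}-r^{c}_i$ is a multiple of $\Gamma$ with $(r_{il}-r^{c}_i)/\Gamma\equiv q_i\pmod{M_l}$ (the congruence motivating the definition of $\widetilde q_{il}$), a one-line computation of $(\widetilde r_{il}-\hat r^{c}_{il})/\Gamma$ shows the $m_l$- and $M_l$-multiples drop out modulo $M_l$, leaving $\widetilde q_{il}=\langle q_i+s_{il}\rangle_{M_l}$. Everything then hinges on proving that $s_{il}$ does not depend on $l$.

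This $l$-independence is what I expect to be the main obstacle, and it is exactly where Lemma \ref{circle} enters. The erroneous common residues carrying a fixed index $i$ form a cluster on the circle $\mathbb{Z}/\Gamma$: because $\widetilde r^{c}_{il}=\langle r^{c}_i+\Delta r_{il}\rangle_{\Gamma}$ and $|\Delta r_{il}|<\delta$, we have $d_{\Gamma}(\widetilde r^{c}_{il},r^{c}_i)<\delta$, so all of them lie in the arc of length less than $2\delta$ around $r^{c}_i$. Lemma \ref{circle} provides a gap $(\gamma_{\xi},\gamma_{\langle\xi+1\rangle_{\kappa}})$ of length greater than $2\delta$ containing no element of $\Lambda$; since every $\widetilde r^{c}_{il}$ is an element of $\Lambda$ and the cluster has circular diameter below $2\delta$, which is less than the gap length, the cluster arc can neither contain nor meet that gap, and hence lies entirely in one period of the window cut open at the gap. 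I would then argue, in the two cases of the construction, that the rule defining $\hat r^{c}_{il}$ applies one and the same $\Gamma$-shift to every member of a single cluster: in Case I ($\xi=\kappa$, cut at the seam $0\equiv\Gamma$) no member can wrap, since a wrapped value would fall into the seam gap while being an element of $\Lambda$, forcing $s_{il}\equiv 0$; in Case II the comparison with $\gamma_{\xi}$ sorts the whole cluster to the same side of the cut, producing a common shift. A short bound on $v_{il}:=r^{c}_i+\Delta r_{il}\in(-\delta,\Gamma+\delta)$ against the window then confines this common shift to $\{-1,0,1\}$, ruling out in particular the degenerate combination that would give $|s_i|=2$.

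With $s_{il}=s_i$ constant in $l$ and $|s_i|\le 1$ established, the CRT step and the two displayed identities of the first paragraph finish the proof. The only points requiring genuine care, beyond bookkeeping, are the uniform bound $|s_i|\le 1$ and the verification that the CRT reconstruction returns the representative $q_i+s_i$ rather than a $\prod_l M_l$-shifted one; the former is controlled by the placement of the cut inside the gap, and the latter by the hypothesis that $M$ is large enough together with the margin analysis of Remark \ref{margin}.
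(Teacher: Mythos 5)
Your overall architecture is sound and is in substance the same mechanism the paper uses: Lemma \ref{circle} supplies a gap of length greater than $2\delta$ in $\Lambda$, the perturbed common residues of a fixed $X_i$ form an arc of circular diameter less than $2\delta$, hence the cluster cannot straddle the cut placed in that gap and every coordinate receives the same $\Gamma$-shift. Packaging this as the single identity $\hat r^{c}_{il}=r^{c}_i+\Delta r_{il}-s_i\Gamma$ and deriving both conclusions from it is cleaner than the paper's enumeration (which introduces ``crossing residues'' and works through Subcases 1--3 inside each of Case I and Case II), and your reduction of the $\alpha_{il}$ identity and of $|\widetilde q_i-q_i|\le 1$ to that single fact is correct, as is the exclusion of $s_{il}=2$ and the deferral of the boundary representative to Remark \ref{margin}.

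There is, however, one concrete false step: in Case I you assert that ``no member can wrap, \ldots forcing $s_{il}\equiv 0$,'' on the grounds that a wrapped value would land in the seam gap. That is not so. A wrapped value $\widetilde r^{c}_{il}=\langle r^{c}_i+\Delta r_{il}\rangle_\Gamma$ with $r^{c}_i+\Delta r_{il}<0$ lies in $(\Gamma-\delta,\Gamma)$ and is itself an element of $\Lambda$, so it is one of the $\gamma_j$'s and by definition cannot lie in the gap; its existence merely forces $\gamma_\kappa\ge\Gamma-\delta$, which is perfectly consistent with $\gamma_1-\gamma_\kappa+\Gamma>2\delta$ (take $\gamma_1$ near $0$). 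Concretely, $r^{c}_i=1$ with all $\Delta r_{il}=-2$ gives $\widetilde r^{c}_{il}=\Gamma-1$ for every $l$ and $s_i=-1$, still within Case I. What the gap actually buys you in Case I is exactly what the paper proves there (no crossing residue): the $\zeta_{il}$ are equal across $l$, so $s_i=\zeta_i\in\{-1,0,1\}$, not necessarily $0$. This is repaired by your own general argument --- the $\hat r^{c}_{il}$ all lie in the complement of the gap, an interval of length less than $\Gamma-2\delta$, while the $v_{il}=r^{c}_i+\Delta r_{il}$ lie in an interval of length less than $2\delta$, so two distinct values of $s_{il}$ would force two $\hat r^{c}_{il}$ to differ by more than $\Gamma-2\delta$ --- and the final conclusions are unaffected since they only need $|s_i|\le 1$. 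But as written the Case I paragraph proves a false statement from an invalid inference, so it should be replaced by the uniform-shift argument you already use for Case II.
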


\begin{proof} Before presenting the proof, we first define that, for a specific $i$, $i \in \{1,2,...,N\}$, if there exists an $r^{c}_{i}$ such that some but not all of $\Delta r_{il}, l \in \{1,2,...,L\}$, satisfy $r^{c}_{i}+\Delta r_{il}< 0$ or $r^{c}_{i}+\Delta r_{il} \geq \Gamma$, it is called a $crossing~ residue$.  We divide the proof into two parts, under the condition of \textbf{Case I} and \textbf{Case II} in (61)-(63), respectively.

Under the condition of \textbf{Case I}, there exists no crossing residue, otherwise there exist $\widetilde{r}^{c}_{i\eta_1} \in [0,2\delta]$ and $\widetilde{r}^{c}_{i\eta_2}\in [\Gamma-2\delta, \Gamma]$, where $\eta_1,\eta_2 \in \{1,2,...,L\}$. We have $d_\Gamma (\widetilde{r}^{c}_{i\eta_1},\widetilde{r}^{c}_{i\eta_2})= \widetilde{r}^{c}_{i\eta_1} - \widetilde{r}^{c}_{i\eta_2}+\Gamma \leq 2\delta$ and we also get $\widetilde{r}^{c}_{i\eta_1}-\widetilde{r}^{c}_{i\eta_2}+\Gamma \geq \gamma_{1}-\gamma_{\kappa
}+\Gamma>2\delta$. Due to monotonicity of $\{\gamma_{j}\}$, it results in a contradiction. Therefore, given $i$, there are three subcases for $\Delta r_{il}$, $l \in \{1,2,...,L\}$, \\
Subcase-1: All $\Delta r_{il}$, satisfy $0 \leq r^{c}_i+\Delta r_{il} \leq \Gamma$;\\
Subcase-2: All $\Delta r_{il}$, satisfy $r^{c}_i+\Delta r_{il} \geq \Gamma$;\\
Subcase-3: All $\Delta r_{il}$, satisfy $r^{c}_i+\Delta r_{il}< 0$.\\
\indent Therefore, $\widetilde{q}_{il}$ equals to $\langle q_i \rangle_{M_l}$ for Subcase-1 and $\alpha_{il}=|\widetilde{r}^{c}_{il}-r^{c}_i|=|\Delta r_{il}|$. In Subcase-2, $\widetilde{q}_{il}= \langle q_i+1 \rangle_{M_l}$ and $\widetilde{r}^{c}_{il}={r}^{c}_{i}+\Delta r_{il}-\Gamma$, so $\alpha_{il}=|\Gamma+\hat{r}^{c}_{il}-r^{c}_i |=|\Delta r_{il}|$. Similarly, in Subcase-3, $\widetilde{q}_{il}=\langle q_i-1 \rangle_{M_l}$ and $\widetilde{r}^{c}_{il}={r}^{c}_{i}+\Delta r_{il}+\Gamma$, so $\alpha_{il}=|\Gamma+\hat{r}^{c}_{il}-r^{c}_i |=|\Delta r_{il}|$. Therefore the theorem holds.

Under the condition of \textbf{Case II}, firstly assume that $r^{c}_i$ is not a crossing residue for some $i$. Note that, under the assumption, $\widetilde{r}^{c}_{il}  \in [0,\delta) $ in Subcase-2 and $\widetilde{r}^{c}_{il}  \in (\Gamma-\delta,\Gamma)$ in Subcase-3. Due to $\gamma_{\xi+1}-\gamma_{\xi}>2\delta$, which gives that $\gamma_{\xi}<\Gamma-2\delta$ and $\gamma_{\xi+1}>2\delta$, this leads to that all $\widetilde{r}^{c}_{il} $ in Subcase-2 belong to $[\gamma_{1}, \gamma_{\xi}]$ and all $\widetilde{r}^{c}_{il}$   Subcase-3 belong to $ [\gamma_{\xi+1}, \gamma_{\kappa}]$.

In Subcase-1, it is not hard to verify that all $\widetilde{r}^{c}_{il}  \in [\gamma_{1}, \gamma_{\xi}]$ or all $\widetilde{r}^{c}_{il}  \in [\gamma_{\xi+1}, \gamma_{\kappa }]$, otherwise there exist $\widetilde{r}^{c}_{i\eta_1} \geq \gamma_{\xi+1}$ and $\widetilde{r}^{c}_{i\eta_2} \leq \gamma_{\xi}$, where $\eta_1,\eta_2 \in \{1,2,...,L\}$. However, we have $2\delta \geq \widetilde{r}^{c}_{i\eta_1}-\widetilde{r}^{c}_{i\eta_2} \geq \gamma_{\xi+1}-\gamma_{\xi}>2\delta$, which is a contradiction. According to (62) and (63), $\hat{r}^{c}_{il}$ is an integral shift of $\widetilde{r}^{c}_{il}$, so $0 \leq |q_i-\widetilde{q_i}| \leq 1$ is obvious. Furthermore, since $\widetilde{q}_i={q_i}$ in Subcase-3, we get $\alpha_{il}=|({r}^{c}_{i}+\Delta r_{il}+\Gamma)-\Gamma-r^{c}_i|=|\Delta r_{il}|$. In Subcase-1, when all $\widetilde{r}^{c}_{il} \in [\gamma_{1}, \gamma_{\xi}]$, we get $\widetilde{q}_i={q_i}$ and $\alpha_{il}=|{r}^{c}_{i}+\Delta r_{il}-r^{c}_i|=\Delta r_{il}$. Similarly, $\widetilde{q}_i={q_i}+1$ in subcase-2 and subcase-1 when all $\widetilde{r}^{c}_{il} \in [\gamma_{\xi}, \gamma_{\kappa
}]$ and the same conclusions are achieved. The theorem holds as well in such cases. \\
\indent Next, we discuss the situation where $r^{c}_{i}$ is a crossing residue. If $\widetilde{r}^{c}_{il} \in [\Gamma-2\delta,\Gamma)$, $l \in \{1,2,...,L\}$, it leads to that $\widetilde{r}^{c}_{il} \in [\gamma_{\xi+1}, \gamma_{\kappa
}]$. Similarly, if $\widetilde{r}^{c}_{il} \in [ 0,2\delta )$, then $\widetilde{r}^{c}_{il} \in [\gamma_{1}, \gamma_{\xi}]$. A crossing residue $r^{c}_i$ is within $[ 0,\delta )$ or $ [ \Gamma-\delta,\Gamma ) $. According to the analysis above, it is true that when $r^{c}_{i} \in [\Gamma-\delta,\Gamma)$, there exist $ \widetilde{r}^{c}_{i\theta_1} \in [\gamma_{\xi+1}, \gamma_{\kappa
}] $ and $\widetilde{r}^{c}_{i\theta_2} \in [ \gamma_{1}, \gamma_{\xi}]$, where $\theta_1,\theta_2 \in \{1,2,...,L\}$. According to the relation between an integer and its residue, we have
\begin{equation}
X_i +\Delta_{i\theta_1}= K\Gamma M_{\theta_1}+r_{i\theta_1}+\Delta_{i\theta_1}
\end{equation}
where $K\in \mathbb{Z}$. Then
\begin{equation}
\begin{aligned}
 \frac{\widetilde{r}_{i\theta_1}-\widetilde{r}^{c}_{i\theta_1}}{\Gamma} & = \langle \frac{X_i+\Delta_{i\theta_{1}}-K\Gamma M_{\theta_1}-(\Delta_{i\theta_1}+r^{c}_{i})}{\Gamma} \rangle_{M_{\theta_1}}\\
& =\langle \frac{X_i+r^{c}_{i}}{\Gamma} - KM_{\theta_1}  \rangle_{M_{\theta_1}}=\langle \lfloor \frac{X_i}{\Gamma} \rfloor \rangle_{M_{\theta_1}}
\end{aligned}
\end{equation}
Since $\Gamma-2\delta \leq r^{c}_{i}+\Delta_{i\theta_1}<\Gamma$, we obtain
\begin{equation}
\widetilde{q}_{i\theta_1} = \langle \frac{\widetilde{r}_{i\theta_1}-(\widetilde{r}^{c}_{i\theta_{1}}-\Gamma)} {\Gamma} \rangle_{M_{\theta_{1}}} = \langle \lfloor \frac{X_i}{\Gamma} \rfloor +1\rangle_{M_{\theta_1}}
\end{equation}
Similarly, noticing that $r^{c}_{i}+\Delta_{i\theta_2}>\Gamma$ and $\hat{r}^{c}_{i\theta_2}=\widetilde{r}_{i\theta_2}=r^{c}_{i}+\Delta_{i\theta_2}-\Gamma$, we have
\begin{equation}
\widetilde{q}_{i\theta_2}=\langle \frac{\widetilde{r}_{i\theta_2}-({r}^{c}_{i}+\Delta_{i\theta_2}-\Gamma)}{\Gamma} \rangle_{M_{\theta_2}}=\langle \lfloor \frac{X_i}{\Gamma} \rfloor +1 \rangle_{M_{\theta_2}}
\end{equation}
Thus $\widetilde{q}_{il}=\langle \lfloor \frac{X_i}{\Gamma} \rfloor +1 \rangle_{M_l}$ for $l=1,2,...,L$. For $\widetilde{r}^{c}_{il}=r^{c}_{i}+\Delta r_{il}$, which is within $[\Gamma-2\delta,\Gamma]$, then $ \hat{r}^{c}_{il}=\widetilde{r}^{c}_{il}-\Gamma$ is a negative integer and $\alpha_{il}=|\Gamma+\widetilde{r}^{c}_{il}-\Gamma-r^{c}_{i}|=|\Delta r_{il}|$. For $\hat{r}^{c}_{il}=\widetilde{r}^{c}_{il}=r^{c}_{i}+\Delta r_{il}-\Gamma$ within $[0,\delta]$, $\alpha_{il}=|\Gamma+r^{c}_{i}+\Delta r_{il}-\Gamma-r^{c}_{i|}|=|\Delta r_{il}|$. Correspondingly, when $ r^{c}_{i} \in [0,\delta) $ is a crossing residue, we have $\widetilde{q}_{il} =\langle q_i \rangle_{M_l}$. The conclusion of $\alpha_{il}$ is obtained with similar ideas.
\end{proof}
\indent Based on Theorem \ref{GRCRT}, $|\widetilde{X}_i-X_i| \leq \delta$ is satisfied.

\subsection{New Advances on GCRT}
\noindent In the following, we further improve the method in \cite{2016symmetric} to determine $\widetilde{q}_i$ defined in Subsection A. Here we assume that $r_{il}$ in each residue set are distinct or the repetition can be determined, i.e., $\kappa=NL$. The probability of such case holding is lower bounded by $\prod_{i=1}^{N} (1-\frac{(i-1)L}{\min \{\mathscr{M}\}})$, which is greatly high in practice. For the case that there exist repeated residues, the solution can be referred to \cite{2007sharpened}, which we will analyze later. We first recall our previous results about applying symmetric polynomials to develop GCRT. The well-known Viete Theorem tells the relationship between the coefficients of a polynomial and its roots,
\begin{lm} (Viete Theorem)
 Any polynomial of $N$-degree
\begin{equation}
P(x)=a_Nx^N+a_{N-1}x^{N-1}+\cdots +a_1x+a_0
\end{equation}
is known to have $N$ roots $\{x_1,x_2, \ldots,x_N\}$ by the fundamental theorem of algebra and relationships between the roots. The coefficients are:
\begin{equation}
\begin{cases}
\sum_{i=1}^{N} x_i=-\frac{a_{N-1}}{a_N}=c_1\\
\sum_{1\leq g<h \leq N}x_g x_h=\frac{a_{N-2}}{a_N}=c_2\\
~~~~~~~~~~~~~~~\cdots\\
\prod_{i=1}^{N}x_i=(-1)^N \frac{a_0}{a_N}=c_N
\end{cases}
\end{equation}
\end{lm}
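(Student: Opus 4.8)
The plan is to reduce the statement to the complete factorization of $P(x)$ over $\mathbb{C}$ followed by a comparison of coefficients. First I would invoke the fundamental theorem of algebra, which guarantees that the degree-$N$ polynomial $P(x)$ splits completely, so that
\begin{equation}
P(x) = a_N \prod_{i=1}^{N}(x - x_i),
\end{equation}
where $x_1, x_2, \ldots, x_N$ are precisely the $N$ roots listed with multiplicity. This factorization is the only structural input required; the remainder of the argument is purely algebraic.

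Next I would expand the product $\prod_{i=1}^{N}(x - x_i)$ and identify the coefficient of each power of $x$. The key claim, which I would establish by induction on $N$, is that
\begin{equation}
\prod_{i=1}^{N}(x - x_i) = \sum_{k=0}^{N} (-1)^k e_k\, x^{N-k},
\end{equation}
where $e_k = \sum_{1 \le i_1 < \cdots < i_k \le N} x_{i_1} \cdots x_{i_k}$ denotes the $k$-th elementary symmetric polynomial, with $e_0 = 1$. Combinatorially, producing a monomial $x^{N-k}$ amounts to selecting exactly $k$ of the $N$ factors from which to take the constant part $-x_i$ and the remaining $N-k$ factors from which to take $x$; the factor $(-1)^k$ arises from the $k$ chosen constant terms, and summing over all such selections yields exactly $(-1)^k e_k$.

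Finally I would match the two expressions for $P(x)$. Writing $P(x) = \sum_{j=0}^{N} a_j x^j$ and equating the coefficient of $x^{N-k}$ gives $a_{N-k} = (-1)^k a_N e_k$, hence $e_k = (-1)^k a_{N-k}/a_N$. Specializing $k = 1, 2, \ldots, N$ recovers exactly the system in the statement, namely $c_1 = e_1 = -a_{N-1}/a_N$, $c_2 = e_2 = a_{N-2}/a_N$, and in general $c_N = e_N = (-1)^N a_0/a_N$.

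The argument presents no genuine obstacle, as this is a classical result; the one step deserving care is the inductive identification of the elementary symmetric polynomials in the expansion, since it is precisely the appearance of $e_k$ as the normalized coefficient $c_k$ that later justifies encoding the unordered roots (the integers to be recovered in the GCRT reduction) by their symmetric functions.
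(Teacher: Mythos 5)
Your proof is correct: the factorization $P(x)=a_N\prod_{i=1}^N(x-x_i)$ guaranteed by the fundamental theorem of algebra, the expansion into elementary symmetric polynomials, and the coefficient comparison together give exactly the stated relations. The paper itself offers no proof of this lemma, treating Viete's formulas as a classical fact, so your argument is simply the standard textbook derivation that the authors implicitly rely on; there is nothing to reconcile.
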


The converse is also true. To recover $\widetilde{q}_i$, it is equivalent to finding a $N$-degree polynomials, of which the roots are $\widetilde{q}_i$. Calculating the coefficients $c_i$ can be implemented by computing the symmetric polynomial of $\widetilde{q}_i$ in each $\mathbb{Z}/M_l$ via (71). The residues of $c_i$ modulo $M_l$ is determined by the symmetric polynomials in (71) of the residues $\widetilde{q}_{il}$. Due to the symmetry, there is no need to distinguish the correspondence. However, it is a computation intensive task if the Viete Theorem is directly applied to obtain $c_i$, which require $O(NL2^N)$ modular multiplication (MM) operations. We further introduce the following scheme to reduce the complexity sharply.

\begin{lm} (Viete-Newton Theorem)
 Given $N$ integers $\{x_1,x_2, \ldots,x_N\}$ and power sum symmetric polynomials, $S_k=\sum_{i=1}^{N} x_i^k~, k=1,2, \ldots,N$, solving the equations $S_k=p_k, k=1,2, \ldots,N$, is equivalent to solving
\begin{equation}
	P(x)=x^N-c_1x^{N-1}+\cdots+(-1)^N c_N=0
\end{equation}
where $c_0=1$ and
\begin{equation}
c_i = \frac{1}{i}\sum_{j=1}^{i}(-1)^{j-1}p_jc_{i-j}, 1\leq i \leq N
\end{equation}
\end{lm}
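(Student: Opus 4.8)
The plan is to recognize the claimed recursion as the classical Newton--Girard identities linking the power sums $S_k = p_k$ to the elementary symmetric functions $c_k$, to prove them through a generating-function (logarithmic-derivative) computation, and then to deduce the asserted equivalence from the invertibility of the recursion together with the Viete Theorem.

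First I would work in the ring of formal power series in an indeterminate $t$ and set
$$f(t) = \prod_{i=1}^N (1 - x_i t).$$
Expanding the product and invoking the Viete Theorem gives $f(t) = \sum_{k=0}^N (-1)^k c_k t^k$ with $c_0 = 1$, where the $c_k$ are exactly the elementary symmetric polynomials in $x_1,\ldots,x_N$, hence the coefficients of $P(x)$ with the sign pattern recorded in the statement. Next I would take the logarithmic derivative: since $f'(t)/f(t) = \sum_{i=1}^N \frac{-x_i}{1 - x_i t}$ and each summand expands as $-x_i\sum_{m\ge 0}(x_i t)^m$, summing over $i$ produces
$$\frac{f'(t)}{f(t)} = -\sum_{m \ge 0} p_{m+1} t^m,$$
that is, $f'(t) = -f(t)\sum_{m\ge 0} p_{m+1} t^m$.

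I would then extract the coefficient of $t^{i-1}$ on both sides for $1\le i\le N$. On the left, $f'(t) = \sum_{k\ge 1}(-1)^k k\,c_k t^{k-1}$ contributes $(-1)^i i\,c_i$; on the right, the Cauchy product contributes $-\sum_{j=0}^{i-1}(-1)^j c_j p_{i-j}$. Reindexing $j \mapsto i-j$ and cancelling the common factor $(-1)^i$ turns the resulting equation into $i\,c_i = \sum_{j=1}^{i}(-1)^{j-1} p_j c_{i-j}$, which is precisely the claimed formula. To close the equivalence, I note that this recursion expresses each $c_i$ in terms of $p_1,\ldots,p_i$ and the already determined $c_0,\ldots,c_{i-1}$; it is therefore lower-triangular with invertible diagonal factor $\tfrac{1}{i}$ and can be solved backwards to recover each $p_i$ from $c_1,\ldots,c_i$. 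Hence prescribing the power sums $p_1,\ldots,p_N$ is equivalent to prescribing the coefficients $c_1,\ldots,c_N$, which by the Viete Theorem is equivalent to prescribing the monic polynomial $P(x)$ whose root multiset is $\{x_1,\ldots,x_N\}$; solving $S_k = p_k$ thus amounts to solving $P(x)=0$.

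The core identity is classical, so the main obstacle is not depth but bookkeeping: keeping the sign conventions consistent between the $(1-x_i t)$ product and the $(-1)^N c_N$ sign pattern of $P$, and making the reversibility of the recursion precise. In particular, the division by $i$ is harmless over $\mathbb{Q}$, but since this lemma is applied over each $\mathbb{Z}/M_l$ in Subsection~A, the inversion requires $i$ to be invertible modulo $M_l$ for $1\le i\le N$; I would flag the hypothesis $\gcd(M_l, N!)=1$ (or simply $M_l$ prime with $M_l>N$) explicitly so that the equivalence holds in the residue rings actually used by the algorithm.
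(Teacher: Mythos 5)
Your derivation is correct: the generating-function computation with $f(t)=\prod_{i=1}^N(1-x_it)$ and its logarithmic derivative is the standard proof of the Newton--Girard identities, and your coefficient extraction and reindexing land exactly on the stated recursion. There is nothing in the paper to compare it against, because the paper states this lemma without any proof, treating it as a classical fact and moving straight to the operation count; so your write-up supplies an argument the authors omitted rather than an alternative to one they gave. Your closing remark is the genuinely valuable part: the recursion from $c$'s back to $p$'s needs no division (the $j=i$ term isolates $p_i$ with coefficient $\pm 1$), but the forward direction divides by $i$, and since Algorithm~2 evaluates the $e_i$ residue-wise in each $\mathbb{Z}/M_l$, one needs $\gcd(i,M_l)=1$ for $1\le i\le N$. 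The paper never states this hypothesis, and its own worked example violates it ($M_2=9$ with $N=3$, so $3$ is not invertible modulo $9$); in that case the symmetric functions must be formed by Viete's formulas directly rather than via the Newton recursion, which costs the $O(NL2^N)$ complexity the authors were trying to avoid. Flagging $\gcd(M_l,N!)=1$ as an explicit assumption, as you propose, is the right fix.
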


Therefore, constructing $\{S_i\}, i=1,2,...,N$, only requires $(N-1)NL$ MM operations and totally constructing $\{c_i\}, i=1,2,...,N,$ requires $O(N^2L)$ MM operations. The complexity of finding roots of an integer coefficient polynomials has been proved in polynomial time with the famous LLL lattice reduction in \cite{LLL}, which takes $O(N^6+N^5(log {(\sum_{i=1}^{N} {c_i}^2   )}^{\frac{1}{2}}))$ arithmetic operations. A great deal of improved methods have been proposed during the last three decades, where we omit the details as it is out of the scope of discussion in this paper. In fact, the most exciting property of the constructed $P(x)$ is that it has $N$ integer roots $\widetilde{q}_i$.  Therefore for a prime $p > \max_{i} \widetilde{q}_i$, the roots of $P(x)$ over $\mathbb{F}_p$ are still $\widetilde{q}_i$. Under Extended Riemann Hypothesis, a factoring algorithm with running time ${O(N^{logN}log p)}^{O(1)}$ is proposed by Evdokimov \cite{ERH}. In real implementation, besides algebraic methods, we believe numerical solutions such as the Newton-Raphson method are more efficient as $P(x)$ only has integer roots and a very small probability that there exist repeated roots. After we introduce the fundamental theory above, the key issue left is how to make $c_i$ as small as possible given dynamic range $K$. Before we present the results, we first define a negative number in $\mathbb{Z}/m$.

\begin{lm} \label{negative} For an integer $X$ such that $|X|<\frac{m}{2}$, if $X<0$, then $\langle X \rangle_{m} \geq \frac{m}{2}$ and if $X \geq 0$, then $\langle X \rangle_{M}<\frac{m}{2}$.
\end{lm}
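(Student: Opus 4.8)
The plan is to argue directly from the definition of the residue, $\langle X \rangle_m = X - m\lfloor X/m \rfloor$, which by construction lies in $[0, m)$. Under the hypothesis $|X| < m/2$ the integer $X$ sits strictly inside $(-m/2, m/2)$, so $X/m$ lies in $(-1/2, 1/2)$ and the folding number $\lfloor X/m \rfloor$ can take only the value $0$ or $-1$. I would split the argument into exactly these two cases, keyed on the sign of $X$, and simply read off the value of the residue in each.

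For $X \ge 0$: since $0 \le X < m/2 < m$, we have $\lfloor X/m \rfloor = 0$, hence $\langle X \rangle_m = X < m/2$, which is the second claim. For $X < 0$: since $-m/2 < X < 0$, the quotient $X/m$ lies in $(-1/2, 0)$, so $\lfloor X/m \rfloor = -1$ and $\langle X \rangle_m = X + m$. Because $X > -m/2$, this gives $\langle X \rangle_m = X + m > m/2$, establishing the first claim $\langle X \rangle_m \ge m/2$ (indeed with strict inequality).

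There is essentially no obstacle here: the entire content is that the restriction $|X| < m/2$ pins the folding number to a single value on each side of zero, so the residue is just $X$ or $X + m$. The only points worth noting are bookkeeping: the residue is taken in the half-open interval $[0,m)$, and the inequality in the statement should read $\langle X \rangle_m$ rather than $\langle X \rangle_M$ in the $X \ge 0$ branch. Neither affects the argument, and the lemma will serve downstream merely as a sign test, letting one recover the sign of a small integer $X$ from its residue by comparing $\langle X \rangle_m$ against the threshold $m/2$.
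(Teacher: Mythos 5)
Your proof is correct: the hypothesis $|X|<m/2$ forces the folding number $\lfloor X/m\rfloor$ to be $0$ for $X\ge 0$ and $-1$ for $X<0$, from which both claims follow immediately (with strict inequality even in the negative branch). The paper states this lemma without proof, treating it as immediate, and your two-case verification is exactly the intended justification; your observation that $\langle X\rangle_M$ should read $\langle X\rangle_m$ is also right.
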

\indent Lemma \ref{negative} is used to check the sign of $X$.

\begin{rmk} \label{margin}
If $ X_i<\delta $ and $X_i+\Delta r_{il}<0$ for $ l=1,2,...,L $, the dynamic range should be doubled. Therefore, $\widetilde{q}_i=\langle \prod_{l=1}^{L} M_{l}-1 \rangle_{\prod_{l=1}^{L} M_{l}}=-1$ can be recovered and the signs of the coefficients in the symmetry polynomials are checked with Lemma \ref{negative}. Interested readers my refer to \cite{rnscomparison} for more details.
\end{rmk}

\indent Inspired by the ideas in \cite{2016symmetric}, we develop the following symmetry polynomials.
\begin{equation}
\begin{aligned}
      \Lambda=& \{ e_1=\sum_{i=1}^{N}\{\widetilde{q}_i-{\widetilde{q}}\}, e_2=\sum_{1\leq i<j \leq n}(\widetilde{q}_i-{\widetilde{q}})(\widetilde{q}_j-{\widetilde{q})},\\
                       & \ldots, {e_N=\prod_{i=1}^{N}{(\widetilde{q_i}-{\widetilde{q})}}} \}
\end{aligned}
\end{equation}
where $\widetilde{q}= \frac{\sum_{i=1}^{N}\widetilde{q}_i}{N} $. If the condition of $\prod_{l=1}^{L} M_{l} > \max\{\sum_{i=1}^{N} \widetilde{q}_{i},~2\binom{N}{i}d^{i},i=2,3,...,N\}$ is satisfied, $\{\widetilde{q}_i, i=1,2,...,N\}$ can be uniquely determined using $\{\widetilde{q}_{il}\}$, where $d=\max_i\{\widetilde{q}_i\}-\min_i\{\widetilde{q}_i\}$. Furthermore, if $d\geq N$, the condition above is equivalent to $\prod_{l=1}^{L} M_{l} > max\{\sum_{i=1}^{N} \widetilde{q}_{i}, 2d^{N} \}$. In fact the bound can be further reduced to be $max\{\sum_{i=1}^{N} \widetilde{q}_{i}, 2\binom{N}{i}({\frac{d}{2}})^{i},i=2,3,...,N\}$

 Based on Lemma \ref{negative}, we further extend the symmetric-polynomial GCRT. Considering the following symmetric polynomial set, we give Theorem \ref{symmbound}.

\begin{thm} \label{symmbound} For $N$ real numbers, $\{Z_i, i=1,2,...,N\}$, arranged in an ascending order such that $\sum_{i=1}^{N} Z_i=0$ and $0 \leq Z_N-Z_1\leq d$, then
\begin{equation}
|\sum_{1 \leq i_1 <...<i_V \leq N} \prod_{k=1}^{V} Z_{i_k}| \leq \binom{N}{V} (d/2)^{V}
\end{equation}
where  $V \in \{2,3,...,N\}$ and $\{i_k\}$ is a $V$-dimensional subset of $\{1,2,...,N\}$.
\end{thm}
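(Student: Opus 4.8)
The sum $\sum_{1\le i_1<\cdots<i_V\le N}\prod_{k=1}^V Z_{i_k}$ is precisely the $V$-th elementary symmetric polynomial $e_V(Z_1,\dots,Z_N)$, so the statement is an upper bound on $|e_V|$ over all real tuples with $\sum_i Z_i=0$ and range $\max_i Z_i-\min_i Z_i\le d$. Because $\sum_i Z_i=0$ the mean lies between the extremes, so the interval $[\min_i Z_i,\max_i Z_i]$ of length $\le d$ contains $0$. The first move is to normalize: writing $Z_i=(d/2)u_i$ and using homogeneity $e_V(Z)=(d/2)^V e_V(u)$, the theorem reduces to the scale-free \emph{core inequality}: if $\sum_i u_i=0$ and the $u_i$ have range at most $2$, then $|e_V(u)|\le\binom{N}{V}$, i.e. $|e_V|$ is no larger than the value it would have if every $|u_i|\le 1$.

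I would next record why the naive estimate fails, since this dictates the strategy. Centering at the midpoint $\mu=\tfrac12(\min_i Z_i+\max_i Z_i)$ and setting $Y_i=Z_i-\mu$ gives $|Y_i|\le d/2$ and $|\mu|\le d/2$, together with the shift identity $e_V(Z)=\sum_{k=0}^V\binom{N-k}{V-k}\mu^{V-k}e_k(Y)$. Bounding each factor crudely and using $\binom{N}{k}\binom{N-k}{V-k}=\binom{N}{V}\binom{V}{k}$ yields only $|e_V(Z)|\le\binom{N}{V}d^V$, which is off by a factor $2^V$. That factor is spurious because $\mu$ is \emph{not} free: the constraint $\sum_i Z_i=0$ forces $\mu=-\tfrac1N e_1(Y)$, which is exactly the hypothesis $\sum_i u_i=0$. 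Thus the zero-sum condition must be used in an essential, non-triangle-inequality way; this is the conceptual heart of the theorem.

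To exploit the zero-sum condition on the core inequality, I would split the indices into the positive part $P=\{i:u_i\ge0\}$ and the negative part $Q=\{i:u_i<0\}$, and expand by the number $j$ of negative factors in each $V$-subset, giving $e_V(u)=\sum_{j=0}^V(-1)^j\,e_{V-j}(u_P)\,e_j(|u_Q|)$, where $u_P$ are the (nonnegative) positive values and $|u_Q|$ the magnitudes of the negative ones. The hypotheses now read cleanly: the zero-sum condition is the \emph{opposite-mass} identity $\sum_P u_i=\sum_Q|u_i|=:m$, and writing $B=\max_P u_i$, $A=\max_Q|u_i|$ the range bound is $A+B\le 2$. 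These combine into $m/|P|+m/|Q|\le A+B\le 2$, since each average is at most its maximum. Applying Maclaurin's inequality to each nonnegative block, $e_{V-j}(u_P)\le\binom{|P|}{V-j}(m/|P|)^{V-j}$ and $e_j(|u_Q|)\le\binom{|Q|}{j}(m/|Q|)^{j}$, reduces matters (morally) to the \emph{equalized} two-value tuple: $|P|$ entries equal to $m/|P|$ and $|Q|$ entries equal to $-m/|Q|$, which is itself feasible (still zero-sum, still range $\le 2$), and whose generating polynomial is $(1+\beta x)^p(1-\alpha x)^q$ with $p\beta=q\alpha=m$ and $\alpha+\beta\le 2$.

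The main obstacle is the passage to, and evaluation of, this two-value model. The subtlety is that Maclaurin's inequality increases every $e_{V-j}(u_P)$ and $e_j(|u_Q|)$, but $e_V(u)$ is an \emph{alternating} sum of their products, so one cannot simply replace factors by their upper bounds; the reduction to the equalized tuple must instead be justified by a monotone smoothing argument (an equalizing transfer between two positive coordinates changes $e_V$ affinely, and one shows such transfers drive $|e_V|$ to its extreme at the equalized configuration), and controlling this alternating sum is the crux. Granting the reduction, the residual estimate is $\bigl|[x^V](1+\beta x)^p(1-\alpha x)^q\bigr|\le\binom{p+q}{V}$ under $p\beta=q\alpha$, $\alpha+\beta\le2$; the bound is clearly largest on the active boundary $\alpha+\beta=2$, where $\beta=2q/N$, $\alpha=2p/N$, and an AM--GM comparison (using $4pq\le(p+q)^2$) confirms the worst case is the balanced split $p=q=N/2$, $\alpha=\beta=1$. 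There, for $V=N$, the product collapses to $e_N=\pm(d/2)^N$ and the inequality is attained with \emph{equality} by the symmetric $\pm d/2$ configuration, certifying both that the constant $\binom{N}{V}(d/2)^V$ cannot be improved and that the slack present for $V<N$ is harmless.
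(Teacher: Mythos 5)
Your overall strategy --- split the coordinates into positive and negative parts, use the zero-sum condition to equate the two masses, apply Maclaurin within each block, and reduce to a two-valued extremal configuration --- is the same skeleton as the paper's proof, and you correctly identify that the zero-sum hypothesis is what must kill the spurious factor $2^V$. But the step you yourself flag as ``the crux'' is a genuine gap, and as you have set it up it would be hard to close. You keep the signed expansion $e_V(u)=\sum_j(-1)^j e_{V-j}(u_P)e_j(|u_Q|)$ and propose to justify the passage to the equalized tuple by a ``monotone smoothing argument,'' asserting that an equalizing transfer between two positive coordinates changes $e_V$ affinely and drives $|e_V|$ to its extreme. The affineness is true, but the direction of the change is not controlled: writing $e_V(u)=u_iu_j\,A+(u_i+u_j)\,B+C$ with $A,B,C$ symmetric functions of the remaining \emph{signed} coordinates, the coefficients $A$ and $B$ can each be of either sign, so an equalizing transfer can move $e_V$ toward or away from zero, and the extremum of the signed quantity need not sit at the equalized configuration. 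Your closing ``AM--GM comparison'' for the two-value model is likewise asserted rather than proved.

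The paper closes exactly this gap with one move you are missing: before doing anything else, it bounds $|e_V(Z)|\le\sum_{i_1<\cdots<i_V}\bigl|\prod_k Z_{i_k}\bigr|=e_V(|Z_1|,\dots,|Z_N|)$. Every term is now nonnegative, the expansion over sign blocks becomes $\sum_j e_{V-j}(u_P)\,e_j(|u_Q|)$ with no alternation, and each smoothing step --- Maclaurin within each block, inflating the range to exactly $d$, and the pairwise replacement of $(|Z_i|,|Z_{N-i+1}|)$ by $(d/2,d/2)$ using $|Z_iZ_{N-i+1}|\le(|Z_i|+|Z_{N-i+1}|)^2/4\le d^2/4$ --- is a monotone increase of a manifestly nonnegative quantity, terminating at $F(-d/2,\dots,d/2)=\binom{N}{V}(d/2)^V$. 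In fact, once you take absolute values, your own intermediate identities finish the proof faster than the paper does: the opposite-mass identity $\sum_P Z_i=\sum_Q|Z_i|=m$ together with $m/|P|+m/|Q|\le d$ gives $\sum_i|Z_i|=2m\le 2d\,|P||Q|/(|P|+|Q|)\le Nd/2$, and a single application of Maclaurin to the whole tuple $(|Z_1|,\dots,|Z_N|)$ yields $e_V(|Z|)\le\binom{N}{V}\bigl(\tfrac1N\sum_i|Z_i|\bigr)^V\le\binom{N}{V}(d/2)^V$.
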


\begin{proof} Assume set $\{Z_i\}$ is consisted of $S$ positive and $N-S$ non-positive elements. Let
\begin{equation}
\begin{aligned}
F(Z_1,Z_2,...,Z_N)=& \sum_{1 \leq i_1 <i_2<...<i_V \leq N} |\prod_{k=1}^{V} Z_{i_k}| \\
& \geq |\sum_{1 \leq i_1 <i_2<...,<i_V \leq N} \prod_{k=1}^{V} Z_{i_k}|
\end{aligned}
\end{equation}
\indent Let $\sum_{i=1}^{N-S}Z_i=-A$ and $\sum_{i=N-S+1}^{n}Z_i=A$, where $A$ is a positive number. Assume that $a$ elements of the symmetric sum in (76), i.e., $Z_{i_1},...,Z_{i_{a}}$, are selected from $\{Z_1,Z_2,...,Z_{N-S}\}$ and the other $V-a$ elements, i.e., $Z_{i_{a+1}},...,Z_{i_{V}}$, are selected from $\{Z_{N-S+1},Z_{N-S+2},...,Z_{N}\}$. Let $\Theta_1=\sum_{1 \leq i_1 <...<i_{a} \leq N-S} |\prod_{k=1}^{a} Z_{i_k}|$ and $\Theta_2=\sum_{N-S+1 \leq i_{a+1} <...<i_{V} \leq N} \prod_{k=a+1}^{v} |Z_{i_k}|$, thus (76) is rewritten as $F(Z_1,Z_2,...,Z_N)=\Theta_1 \times \Theta_2$.\\
\indent According to Muirhead's inequality, $\Theta_1=\sum_{1 \leq i_1 <i_2<...<i_{a} \leq N-S} \prod_{k=1}^{a} |Z_{i_k}|\leq \binom{N-S}{a} (\frac{\sum_{i=1}^{N-S}{Z_i}}{N-S})^{a}$. $\Theta_1$ achieves the maximum value iff $Z_1=Z_2=...=Z_{N-S}=-\frac{A}{N-S}$. Similarly, $\Theta_2$ achieves the maximum value iff $Z_{N-S+1}=Z_{N-S+2}=...=Z_{N}=\frac{A}{S}$. In the following, we introduce the method of adjusting the values of elements in $\{Z_i\}$ step-by-step to maximize $F(Z_1,Z_2,...,Z_N)$. Firstly, if $Z_N-Z_1 \neq d$, simultaneously increase $Z_N$ and decrease $Z_1$ until that $Z_N-Z_1=d$. Meanwhile, keep $F$ not to decrease. Furthermore, we can adjust the values of elements in $\{Z_i\}$ to satisfy that $\frac{A}{N-S}+\frac{A}{S}=d$. Therefore, all positive elements equal to $\frac{N-S}{N}d$ and all non-positive elements equal to $-\frac{S}{N}d$. Without loss of generality, assume that $S \leq N-S$ due to the symmetry property, which leads to $\frac{S}{N}d \leq \frac{d}{2}$. We can transform (76) to the following form,
\begin{equation}
F(Z_1,Z_2,...,Z_N)=(|Z_1|+|Z_N|)\Psi_1+(|Z_1Z_N|)\Psi_2+\Psi_3
\end{equation}
where $\Psi_1, \Psi_2$ and $\Psi_3$ are symmetric polynomials of the rest elements. Since $|Z_1Z_N| \leq \frac{(|Z_1|+|Z_N|)^2}{4} \leq \frac{d^2}{4}$, the equality holds iff $Z_1=-\frac{d}{2}$ and $Z_N=\frac{d}{2}$. Therefore, we have $F(Z_1,Z_2,...,Z_N)\leq F(-\frac{d}{2},Z_2,...,Z_{N-1},\frac{d}{2})$. The above operation is performed iteratively until there exist $Z_i$ and $Z_{N-i+1}$ that are both non-positive elements. In this case, because $Z_N-Z_1=d$, we have $|Z_i|+|Z_{N-i+1}|\leq d$ and $|Z_iZ_{N-i+1}|\leq \frac{d^2}{4}$. Hence we obtain
\begin{equation}
F(Z_1,Z_2,...,Z_N)\leq F(-\frac{d}{2},-\frac{d}{2},...,\frac{d}{2})
\end{equation}
\end{proof}

\indent Especially, if $d \geq 2N$, the condition is reduced to $\prod_{l=1}^{L} M_{l} > \max_{i} \{\sum_{i=1}^{N} \widetilde{q}_{i}, 2(\frac{d}{2})^{N}\}$. Based on Theorem \ref{symmbound} and Lemma \ref{negative}, a new GCRT is developed as Algorithm 2.\\
\begin{algorithm}[h]
	\caption{}
	\textbf{Input}  $\widetilde{q}_{il}$ for $i=1,2,...,N$ and $l=1,2,...,L$.  \\
Step-1. Calculate $\langle \sum_{i=1}^{N} \widetilde{q}_{il} \rangle_{M_l}$ for $l \in \{1,2,...,L\}$ and recover $\widetilde{q}=\frac{\widetilde{q}_{i}}{N}$. \\
Step-2. Calculate $\langle \widetilde{q} \rangle_{M_l}$. \\
Step-3. Calculate $\langle\{e_1,e_2,...,e_N\}\rangle_{M_l} $ for $l \in \{1,2,...,L\}$. \\
Step-4. Recover $e_1,e_2,...,e_N$ with CRT. \\
Step-5. Let $e_0=1$ and construct the polynomial $P(x)=\sum_{i=0}^{N}(-1)^ie_ix^{N-i}$.\\
Step-6. Solve the equation $P(x)=0$ and get $N$ roots $\{\widetilde{q}_1-\widetilde{q},\widetilde{q}_2-\widetilde{q},...,\widetilde{q}_N-\widetilde{q}\}$. Therefore, $\{\widetilde{q}_1,\widetilde{q}_2,...,\widetilde{q}_N\}$ are obtained.\\
        \textbf{Output}  $\widetilde{q}_1,\widetilde{q}_2,...,\widetilde{q}_N$
\end{algorithm}

Since CRT can be trivially generalized for real number residues \cite{2010closed}, $\widetilde{q}$ is not limited to be an integer. In the following part, we will discuss the lower bound of the dynamic range in order to recover multiple frequencies. Under the condition of no repeated residues existing, we have
\begin{equation}
\begin{aligned}
d &\leq \lfloor \frac{\max_i\{X_i\}+\delta}{\Gamma} \rfloor - \lfloor \frac{\min_i\{X_i\}-\delta}{\Gamma} \rfloor \\
&\leq \frac{\max_i\{X_i\}-\min_i\{X_i\}+2\delta}{\Gamma} +1.
\end{aligned}
\end{equation}
On the other hand,
\begin{equation}
\begin{aligned}
   \sum_{i=1}^{N}\widetilde{q}_i \leq \sum_{i=1}^{N}{q_i}+N < \lfloor \frac{ \sum_{i=1}^{N} X_i+\delta}{\Gamma} \rfloor +2N
   \end{aligned}
\end{equation}
Let $B=\max_i\{X_i\}-\min_i\{X_i\}$ denote the bandwidth in our model. According to Theorem 2, the lower bound is as follows.
\begin{equation}
\begin{aligned}
              \prod_{l=1}^{L}M_j>& \max_{k=2,3,...,N}\{\lfloor \frac{ \sum_{i=1}^{N} X_i+\delta}{\Gamma} \rfloor +2N, \\
              &2\binom{N}{k} (\frac{B+2\delta+\Gamma}{2\Gamma})^{k} \}
  \end{aligned}
\end{equation}
For the case that repeated residues exist, referring to the (\cite{2007sharpened}, Theorem 2), the lower bound is
\begin{equation}
       \prod_{l=1}^{\lceil \frac{L}{N} \rceil} {M_l}>max_i\{q_i\}+1=\lfloor \frac{ max_i\{X_i\}}{\Gamma} \rfloor+1
\end{equation}

\textbf{Example}. Assume that $m_1=50\times7=350, m_2=50\times9=450, m_3=50\times11=550, m_4=50\times13=650$ and $X_1=1110, X_2=1995, X_3=2016$. Thus $r_{11}=60, r_{12}=210, r_{13}=0, r_{14}=460$; $r_{21}=245, r_{22}=195, r_{23}=345, r_{24}=460$; $r_{31}=246, r_{32}=216, r_{33}=366, r_{34}=66$. Let $\delta=4<\frac{50}{4\times3}$ and assume that the erroneous residue sets are $\{\widetilde{r}_{11}=64, \widetilde{r}_{12}=206, \widetilde{r}_{13}=547, \widetilde{r}_{14}=462\}$, $\{\widetilde{r}_{21}=247, \widetilde{r}_{22}=192 \widetilde{r}_{23}=348, \widetilde{r}_{24}=48\}$ and $\{\widetilde{r}_{31}=250, \widetilde{r}_{32}=213, \widetilde{r}_{33}=370, \widetilde{r}_{34}=62\}$. Accordingly, we get \{$\widetilde{r}^{c}_{11}=14, \widetilde{r}^{c}_{12}=6, \widetilde{r}^{c}_{13}=47, \widetilde{r}^{c}_{14}=12$; $\widetilde{r}^{c}_{21}=47, \widetilde{r}^{c}_{22}=42 \widetilde{r}^{c}_{23}=48, \widetilde{r}^{c}_{24}=48$; $\widetilde{r}^{c}_{31}=0, \widetilde{r}^{c}_{32}=13,  \widetilde{r}^{c}_{33}=20, \widetilde{r}^{c}_{34}=12$\}. After $\widetilde{r}^{c}_{il}$ are arranged in ascending order, we find that $(\widetilde{r}^{c}_{22}=42)-(\widetilde{r}^{c}_{33}=20)=22>2\delta$. Therefore, let $\hat{r}^{c}_{il}=\widetilde{r}^{c}_{il}$ if $\widetilde{r}^{c}_{il} \leq 20$, otherwise $\hat{r}^{c}_{il}=\widetilde{r}^{c}_{il}-\Gamma$.

Then, $\widetilde{q}_{ij}= \lfloor \widetilde{r}_{ij}-\hat{r}^{c}_{ij} \rfloor$ are obtained, which are the residues of $\widetilde{q}_{i}$ modulo $M_l$, i.e., $\widetilde{q}_{11}=1, \widetilde{q}_{12}=4, \widetilde{q}_{13}=0, \widetilde{q}_{14}=9$; $\widetilde{r}_{21}=5, \widetilde{r}_{22}=4, \widetilde{r}_{23}=7, \widetilde{r}_{24}=1$; $\widetilde{r}_{31}=5, \widetilde{r}_{32}=4, \widetilde{r}_{33}=7, \widetilde{r}_{34}=7$. With Algorithm 1, we recover $ \sum_{i=1}^{3} \widetilde{q}_{i}=102 $ with the residues $\langle \sum_{i=1}^{3} \widetilde{q}_{il} \rangle_{M_l}$, i.e., $\{4,3,3,11\}$. Furthermore, $\widetilde{q}=\frac{102}{3}=34$, where $\{6,7,1,8\}$ are its residues modulo $M_1=7, M_2=9, M_3=11, M_4=13$. Then residue sets of $\{\widetilde{q}_{i}-\widetilde{q}, i=1,2,3\}$ modulo $\{M_l, l=1,2,3,4\}$ are $\{ -5,-1,-1\}$, $\{ -3,-3,-3\}$, $\{ -1,6,6\}$ and $\{ 1,-7,-7\}$, respectively. The residues of $\{e_1, e_2,e_3\}$ modulo $\{M_l, l=1,2,3,4\}$ are $(4,0,2,9)$, $(2,0,8,10)$, $(0,0,0)$, respectively. Therefore we construct the polynomial $P(x)=x^3-108x+432$ and solve it to get roots $\{-12,6,6 \}$, which indicates that $\widetilde{q}_{1}=22, \widetilde{q}_{2}=40, \widetilde{q}_{3}=40$ and the corresponding relation between $\widetilde{q}_{i}$ and $ \widetilde{r}^{c}_{il}$. Finally, the estimated frequencies are obtained as $\widetilde{X}_i=\widetilde{q}_{i}\Gamma + [\frac{\sum_{l=1}^{3} \hat{r}^{c}_{il}}{3}]$, i.e., $\widetilde{X}_1=1110$, $\widetilde{X}_2=1996$ and $\widetilde{X}_3=2016$.

As a final remark, the final construction has been divided into $N$ steps via (65) for each $\widetilde{X}_i$ seperately in the proposed GRCRT, which is the same as the final step for recovery in \cite{2010closed} in essence. Therefore, for each estimation $\widetilde{X}_i$ of $X_i$, the performance is the same as that of \cite{2010closed}. The overall performance of GRCRT can be regarded as that of $N$ independent reconstruction for a single integer in \cite{2010closed} when robustness, i.e., $\delta < \frac{\Gamma}{4N}$, can be achieved.

\section{Conclusion}
\noindent In the paper, we investigated the robustness in CRT from theory to applications. To address CRT-based frequency estimation from undersampling waveforms, we proposed the first RCRT for multiple integers as a complete theoretical solution. Further improvement including a weakening bound of moduli leveraging symmetric polynomials is  developed. Besides, we analyzed the robustness in conventional RCRT from a geometry perspective and proposed the shift pseudo metric to present a general framework to study trade off between dynamic range $K$ and error bound $\delta$ for such Lee-metric based remainder code. Thanks to all the previous works, we step forward a complete solution for the problem.



\begin{thebibliography}{10}

\bibitem{2012storage}
Stefano Chessa and Piero Maestrini.
\newblock Robust distributed storage of residue encoded data.
\newblock {\em IEEE Transactions on Information Theory}, 58(12):7280--7294,
  2012.

\bibitem{ERH}
Sergei Evdokimov.
\newblock Factorization of polynomials over finite fields in subexponential
  time under grh.
\newblock {\em Algorithmic number theory}, pages 209--219, 1994.

\bibitem{access2016}
Hari~Krishna Garg and Hanshen Xiao.
\newblock New residue arithmetic based barrett algorithms: modular integer
  computations.
\newblock {\em IEEE Access}, 4:4882--4890, 2016.

\bibitem{ICASSP2017}
Hari~Krishna Garg and Hanshen Xiao.
\newblock New residue arithmetic based barrett algorithms: Modular polynomial
  computations.
\newblock In {\em Acoustics, Speech and Signal Processing (ICASSP), 2017 IEEE
  International Conference on}, pages 1178--1182. IEEE, 2017.

\bibitem{2000chinese}
Oded Goldreich, Dana Ron, and Madhu Sudan.
\newblock Chinese remaindering with errors.
\newblock In {\em Proceedings of the thirty-first annual ACM symposium on
  Theory of computing}, pages 225--234. ACM, 1999.

\bibitem{soft2000}
Venkatesan Guruswami, Amit Sahai, and Madhu Sudan.
\newblock " soft-decision" decoding of chinese remainder codes.
\newblock In {\em Foundations of Computer Science, 2000. Proceedings. 41st
  Annual Symposium on}, pages 159--168. IEEE, 2000.

\bibitem{LLL}
Arjen~Klaas Lenstra, Hendrik~Willem Lenstra, and L{\'a}szl{\'o} Lov{\'a}sz.
\newblock Factoring polynomials with rational coefficients.
\newblock {\em Mathematische Annalen}, 261(4):515--534, 1982.

\bibitem{search1}
Gang Li, Jia Xu, Ying-Ning Peng, and Xiang-Gen Xia.
\newblock An efficient implementation of a robust phase-unwrapping algorithm.
\newblock {\em IEEE Signal Processing Letters}, 14(6):393--396, 2007.

\bibitem{2013noise}
Wenchao Li, Xuezhi Wang, Xinmin Wang, and Bill Moran.
\newblock Distance estimation using wrapped phase measurements in noise.
\newblock {\em IEEE Transactions on Signal Processing}, 61(7):1676--1688, 2013.

\bibitem{2016two}
Xiaoping Li, Xiang-Gen Xia, Wenjie Wang, and Wei Wang.
\newblock A robust generalized chinese remainder theorem for two integers.
\newblock {\em IEEE Transactions on Information Theory}, 62(12):7491--7504,
  2016.

\bibitem{search3}
Xiaowei Li, Hong Liang, and Xiang-Gen Xia.
\newblock A robust chinese remainder theorem with its applications in frequency
  estimation from undersampled waveforms.
\newblock {\em IEEE Transactions on Signal Processing}, 57(11):4314--4322,
  2009.

\bibitem{2008phase}
Xiaowei Li and Xiang-Gen Xia.
\newblock A fast robust chinese remainder theorem based phase unwrapping
  algorithm.
\newblock {\em IEEE Signal Processing Letters}, 15:665--668, 2008.

\bibitem{search2}
Xiaowei Li and Xiang-Gen Xia.
\newblock A fast robust chinese remainder theorem based phase unwrapping
  algorithm.
\newblock {\em IEEE Signal Processing Letters}, 15:665--668, 2008.

\bibitem{2007sharpened}
Huiyong Liao and Xiang-Gen Xia.
\newblock A sharpened dynamic range of a generalized chinese remainder theorem
  for multiple integers.
\newblock {\em IEEE transactions on information theory}, 53(1):428--433, 2007.

\bibitem{2004lattice}
Igor~E Shparlinski and Ron Steinfeld.
\newblock Noisy chinese remaindering in the lee norm.
\newblock {\em Journal of Complexity}, 20(2-3):423--437, 2004.

\bibitem{2015mle}
Wenjie Wang, Xiaoping Li, Wei Wang, and Xiang-Gen Xia.
\newblock Maximum likelihood estimation based robust chinese remainder theorem
  for real numbers and its fast algorithm.
\newblock {\em IEEE Transactions on Signal Processing}, 63(13):3317--3331,
  2015.

\bibitem{2010closed}
Wenjie Wang and Xiang-Gen Xia.
\newblock A closed-form robust chinese remainder theorem and its performance
  analysis.
\newblock {\em IEEE Transactions on Signal Processing}, 58(11):5655--5666,
  2010.

\bibitem{2000}
X-G Xia.
\newblock An efficient frequency-determination algorithm from multiple
  undersampled waveforms.
\newblock {\em IEEE Signal Processing Letters}, 7(2):34--37, 2000.

\bibitem{2007phase}
Xiang-Gen Xia and Genyuan Wang.
\newblock Phase unwrapping and a robust chinese remainder theorem.
\newblock {\em IEEE Signal Processing Letters}, 14(4):247--250, 2007.

\bibitem{1997}
Xiang-Gen Xia and Guangcai Zhou.
\newblock Multiple frequency detection in undersampled waveforms.
\newblock In {\em Signals, Systems \& Computers, 1997. Conference Record of the
  Thirty-First Asilomar Conference on}, volume~1, pages 867--871. IEEE, 1997.

\bibitem{2016symmetric}
Hanshen Xiao, Cas Cremers, and Hari~Krishna Garg.
\newblock Symmetric polynomial \& crt based algorithms for multiple frequency
  determination from undersampled waveforms.
\newblock In {\em Signal and Information Processing (GlobalSIP), 2016 IEEE
  Global Conference on}, pages 202--206. IEEE, 2016.

\bibitem{ETRI2016}
Hanshen Xiao, Hari~Krishna Garg, Jianhao Hu, and Guoqiang Xiao.
\newblock New error control algorithms for residue number system codes.
\newblock {\em ETRI Journal}, 38(2):326--336, 2016.

\bibitem{2017notes}
Hanshen Xiao and Guoqiang Xiao.
\newblock Notes on crt-based robust frequency estimation.
\newblock {\em Signal Processing}, 133:13--17, 2017.

\bibitem{rnscomparison}
Hanshen Xiao, Yu~Ye, Guoqiang Xiao, and Qin Kang.
\newblock Algorithms for comparison in residue number systems.
\newblock In {\em Signal and Information Processing Association Annual Summit
  and Conference (APSIPA), 2016 Asia-Pacific}, pages 1--6. IEEE, 2016.

\bibitem{poly}
Li~Xiao and Xiang-Gen Xia.
\newblock Minimum degree-weighted distance decoding for polynomial residue
  codes with non-coprime moduli.
\newblock {\em IEEE Wireless Communications Letters}, 2017.

\bibitem{2015newcondition}
Li~Xiao, Xiang-Gen Xia, and Haiye Huo.
\newblock New conditions on achieving the maximal possible dynamic range for a
  generalized chinese remainder theorem of multiple integers.
\newblock {\em IEEE Signal Processing Letters}, 22(12):2199--2203, 2015.

\bibitem{2017towards}
Li~Xiao, Xiang-Gen Xia, and Haiye Huo.
\newblock Towards robustness in residue number systems.
\newblock {\em IEEE Transactions on Signal Processing}, 65(6):1497--1510, 2017.

\bibitem{2014multistage}
Li~Xiao, Xiang-Gen Xia, and Wenjie Wang.
\newblock Multi-stage robust chinese remainder theorem.
\newblock {\em IEEE Transactions on Signal Processing}, 62(18):4772--4785,
  2014.

\bibitem{2014arxiv}
Guangwu Xu.
\newblock On solving a generalized chinese remainder theorem in the presence of
  remainder errors.
\newblock {\em arXiv preprint arXiv:1409.0121}, 2014.

\bibitem{1997multiple}
Guangcai Zhou and Xiang-Gen Xia.
\newblock Multiple frequency detection in undersampled complex-valued waveforms
  with close multiple frequencies.
\newblock {\em Electronics letters}, 33(15):1294--1295, 1997.

\end{thebibliography}

\end{document}